\newtheorem{theorem}{Theorem}
\newtheorem{lemma}[theorem]{Lemma}
\theoremstyle{definition}
\newtheorem{definition}[theorem]{Definition}
\theoremstyle{remark}
\newtheorem{example}[theorem]{Example}
\newtheorem*{examples}{Examples}
\newcommand*\diff{\mathop{}\!\mathrm{d}}
\newcommand*{\bind}{\mathrel{\scalebox{0.7}[1]{$>\!\!\!>\!=$}}}
\newcommand*{\defeq}{\stackrel{\text{def}}=}
\title{A Monad for Probabilistic Point Processes}
\author{
Swaraj Dash \qquad\qquad Sam Staton
\institute{University of Oxford\\
Oxford, United Kingdom}
\email{\quad swaraj.dash@cs.ox.ac.uk \quad\qquad sam.staton@cs.ox.ac.uk}
}
\begin{document}
\maketitle

\begin{abstract} 
A point process on a space is a random bag of elements of that space. In this paper we explore
programming with point processes in a monadic style. To this end we identify point processes
on a space $X$ with probability measures of bags of elements in $X$. We describe this view of
point processes using the composition of the Giry and bag monads on the category of measurable
spaces and functions and prove that this composition also forms a monad using a distributive
law for monads. Finally, we define a morphism from a point process to its intensity measure, 
and show that this is a monad morphism. A special case of this monad morphism gives us Wald's
Lemma, an identity used to calculate the expected value of the sum of a random number of
random variables. Using our monad we define a range of point processes and point process
operations and compositionally compute their corresponding intensity measures using 
the monad morphism.
\end{abstract}

\section{Introduction}\label{intro}
Point processes (e.g.~\cite{ppbook}) are random collections of points. They serve as important tools in statistical analysis, where they are used to study various phenomena in fields as diverse as ecology, astronomy, computational neuroscience, and telecommunications, and in Bayesian analysis, where they are used for probabilistic inference (e.g.~\cite{nips1}).
\begin{figure}[b]
\centering
\includegraphics[scale=0.13]{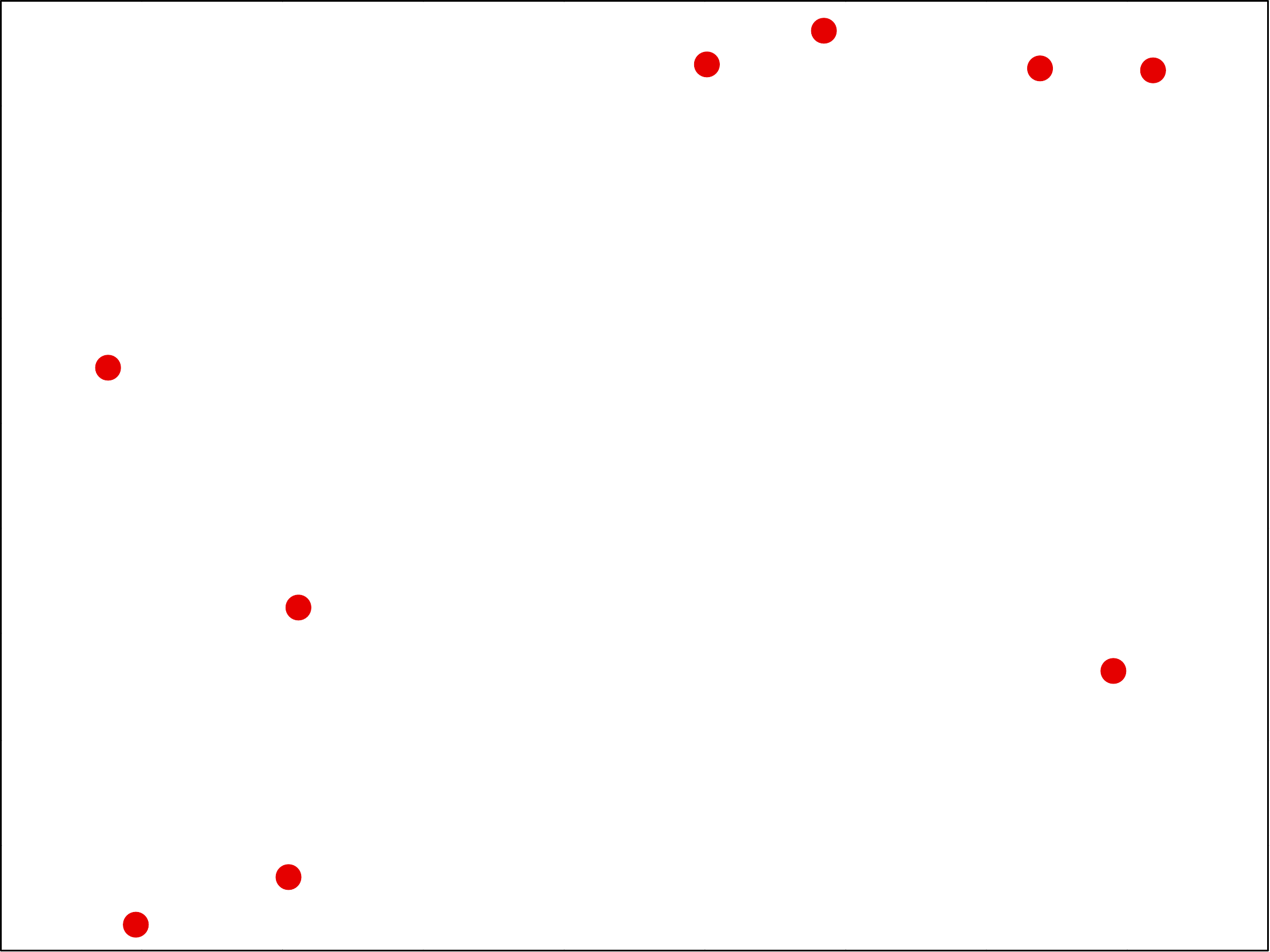}
\includegraphics[scale=0.13]{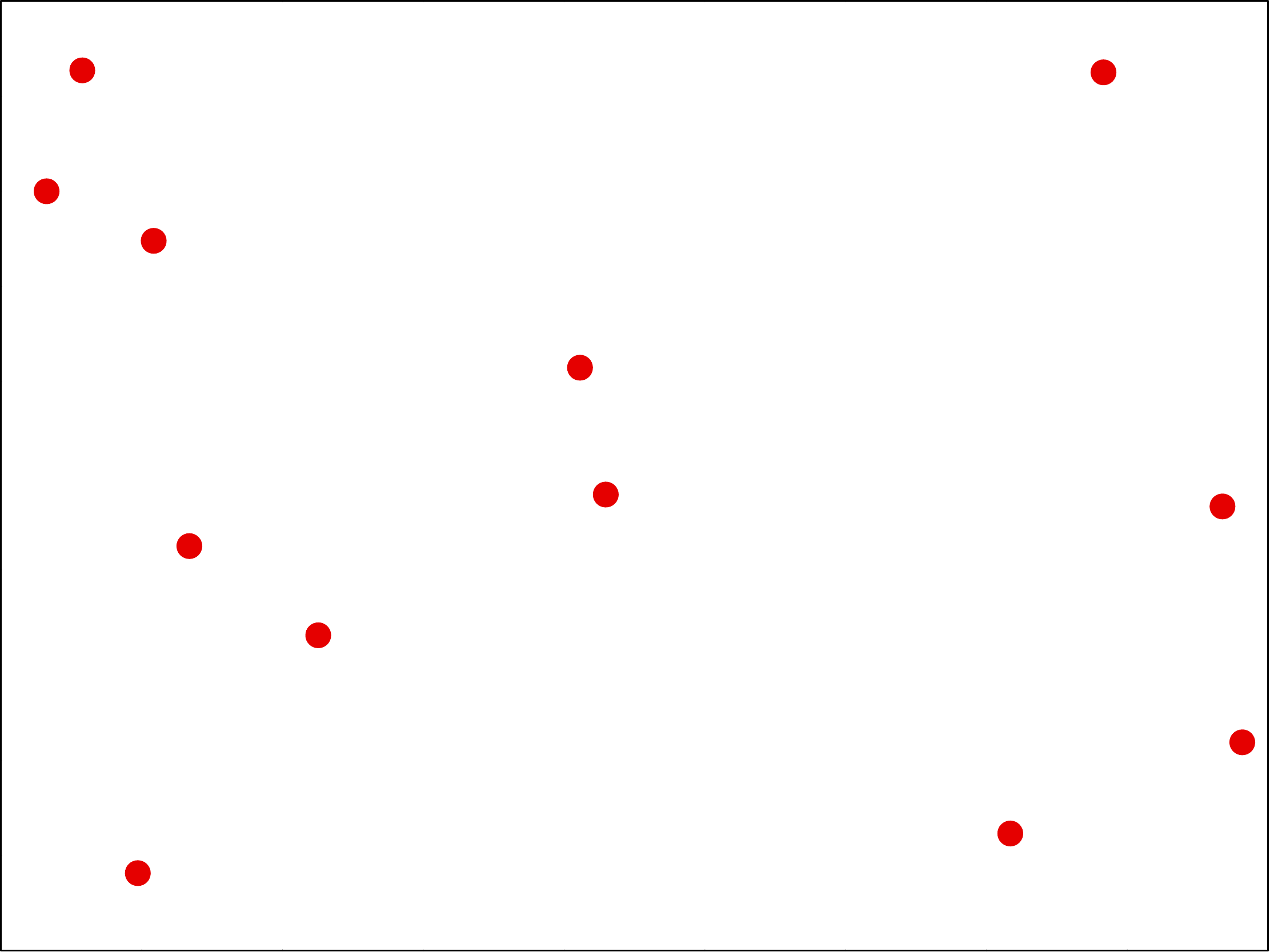}
\includegraphics[scale=0.13]{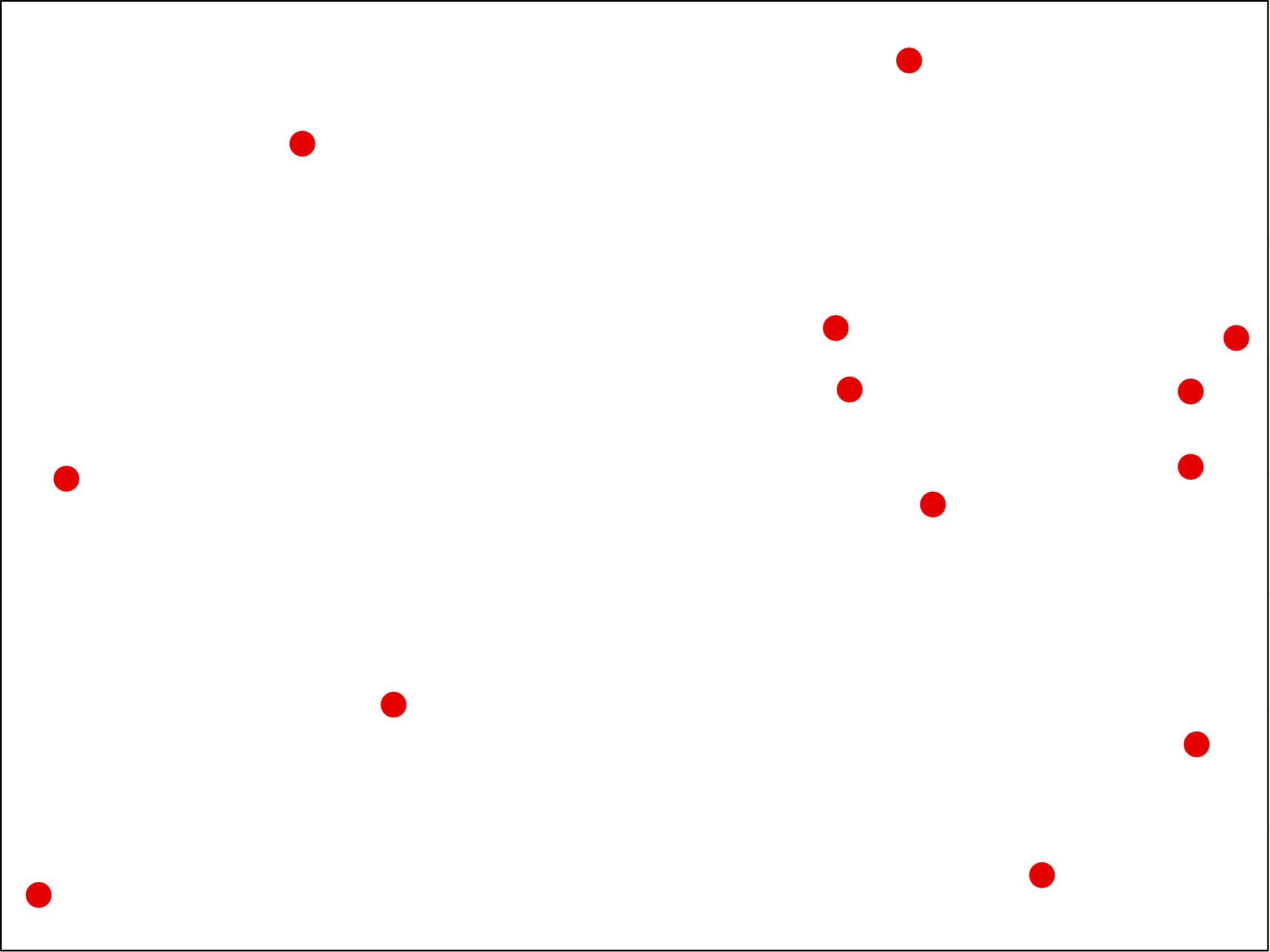}
\includegraphics[scale=0.13]{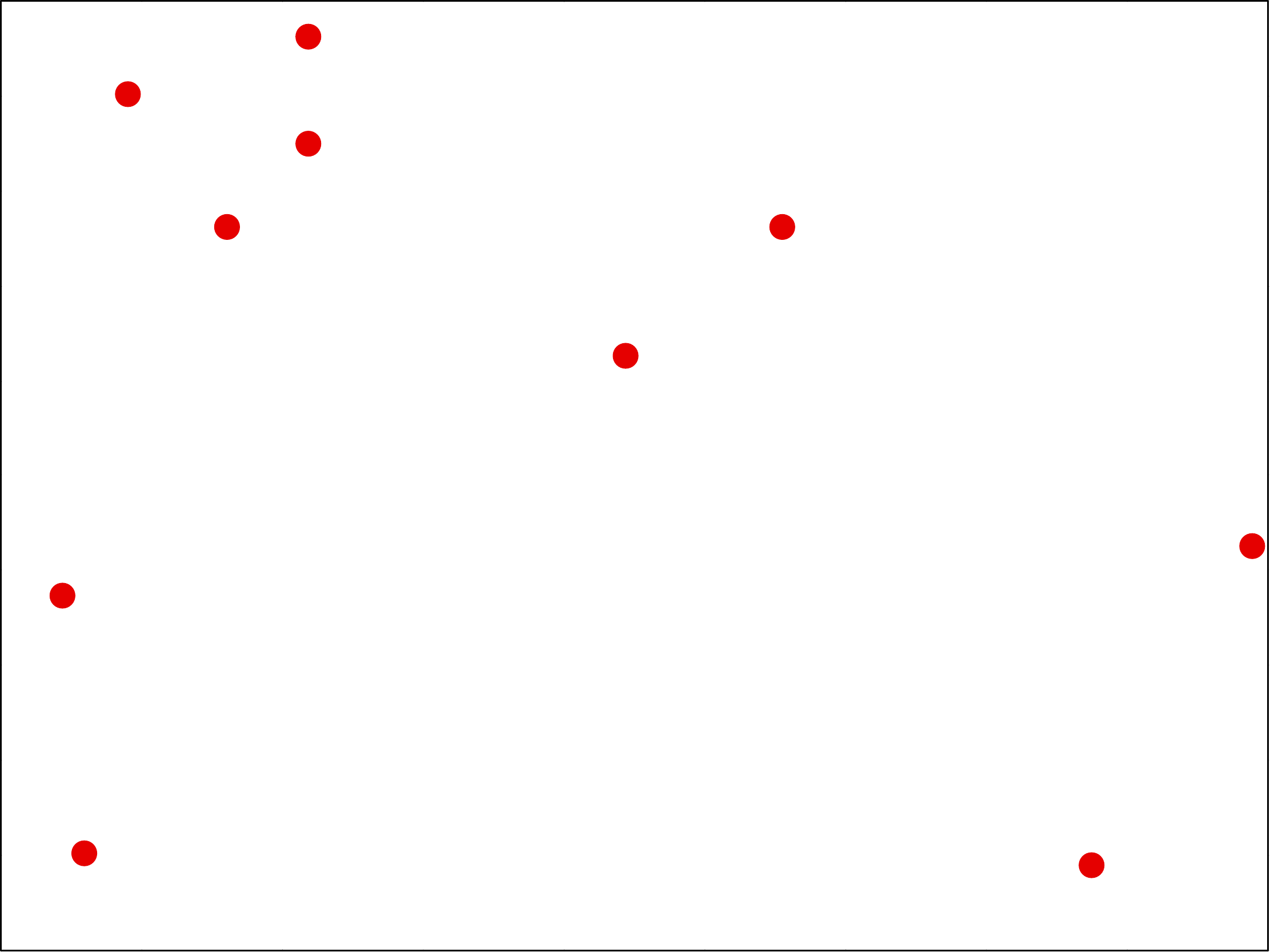}
\includegraphics[scale=0.13]{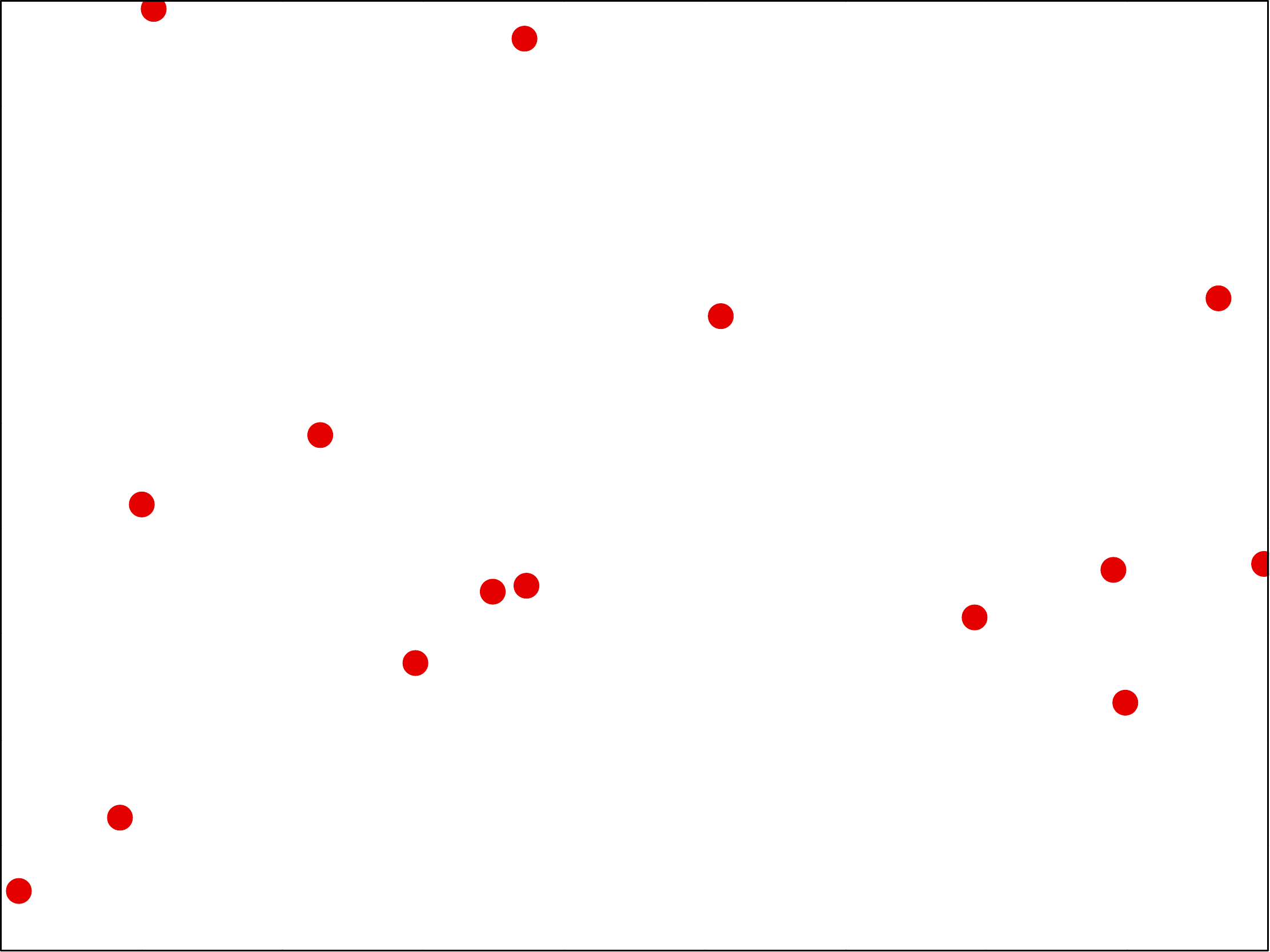}
\caption{Five draws from a Poisson point process on the unit square with rate 10.0.
\label{example}}
\end{figure}%
 As a simple example, in Figure~\ref{example} we illustrate five draws from a Poisson point process on the unit square. A Poisson point process is defined to be one in which the number of points in any two disjoint regions are independent of
each other.
One of the core tools of point process theory is the notion of intensity measure, which assigns to each region the average number of points that will appear in the region. In a homogeneous Poisson point process like Figure~\ref{example}, the average number of points in a region is proportional to the area of the region. This is a very simple point process, but an important starting point for many models.

The centerpiece of our categorical analysis of point processes is the space
\[
  G(B(X))
\]
which we now explain.
\begin{itemize}
  \item $X$ is a measurable space such as the unit square~$\mathbb I^2$ or the natural numbers~$\mathbb N$. We work in a category of measurable spaces so that we can discuss probability and integration in both the countable and uncountable settings. (See \S\ref{sec:prelims} for details.)
  \item $B(X)$ is a space of all bags of points in $X$. A bag (aka multiset) is a finite unordered list of elements in $X$. For example, the first draw in Figure~\ref{example} is a bag of 9 points in $X=\mathbb I^2$,
    and the last draw in Figure~\ref{nat-example} is a bag of 6 points in $X=\mathbb N$, with 5 overlapping points at $0$ (multiplicity $5$) and one at $6$ (multiplicity~$1$).
    (See \S\ref{sec:bag} for details.)
  \item $G(B(X))$ is a space of point processes, i.e.~the space of probability measures on the space of bags of points in $X$. Here, $G$ stands for Giry, who carried out early work on the category theory of spaces of probability measures.
  \end{itemize}
  (This space $G(B(X))$ is typically uncountable,  but as is common in statistics, it is helpful to run a simulation, outputting a finite number of draws, as in Figures~\ref{example}, \ref{nat-example}, and~\ref{bind}.) 

  This paper has two main contributions:
  \begin{itemize}
  \item     The construction $G(B(X))$ forms a monad (\S\ref{pp-monad}). This is useful because it gives us a compositional framework for point processes. One can build point processes (elements $1\to G(B(X))$) by composing morphisms in the Kleisli category for the monad $GB$, using a syntax like Haskell's do-notation (\S\ref{sec:examples}). 
Our construction of a monad uses Beck's theory of distributive laws of monads. 
\item
  The construction assigning to each point process its intensity measure is a monad morphism (\S\ref{sec:intensity}, Theorem.~\ref{thm:expmonadmorph}). Thus, if we build a point process in a compositional way, we can also calculate its intensity in the same compositional way. The key idea here is to regard both $G$ and $B$ as submonads of a monad $M$ of all measures, so that the intensity measure function can be defined as a composite
  \[
    GB\hookrightarrow MM\xrightarrow \mu M\text.
  \]
  \end{itemize}
\vspace{-4mm}


%
\begin{figure}[t]
\centering
\begin{tikzpicture}[framed,scale=0.5,every node/.style={scale=0.9}]
\draw (0,0.1) -- (0,-0.1);
\draw (1,0.1) -- (1,-0.1);
\draw (2,0.1) -- (2,-0.1);
\draw (3,0.1) -- (3,-0.1);
\draw (4,0.1) -- (4,-0.1);
\draw (5,0.1) -- (5,-0.1);
\draw (6,0.1) -- (6,-0.1);
\draw (-0.5,0) -- (6.5,0);
\node at (1,-0.6) {$1$};
\fill[white] (4,3.5) circle (3pt);
\node at (2,-0.6) {$2$};
\node at (3,-0.6) {$3$};
\node at (4,-0.6) {$4$};
\node at (5,-0.6) {$5$};
\node at (0,-0.6) {$0$};
\node at (6,-0.6) {$6$};
\fill[red] (4,0.5) circle (3pt);
\fill[red] (1,0.5) circle (3pt);
\fill[red] (0,0.5) circle (3pt);
\fill[red] (3,0.5) circle (3pt);
\fill[red] (5,0.5) circle (3pt);
\end{tikzpicture}
\begin{tikzpicture}[framed,scale=0.5,every node/.style={scale=0.9}]
\draw (0,0.1) -- (0,-0.1);
\draw (1,0.1) -- (1,-0.1);
\draw (2,0.1) -- (2,-0.1);
\draw (3,0.1) -- (3,-0.1);
\draw (4,0.1) -- (4,-0.1);
\draw (5,0.1) -- (5,-0.1);
\draw (6,0.1) -- (6,-0.1);
\draw (-0.5,0) -- (6.5,0);
\node at (1,-0.6) {$1$};
\node at (2,-0.6) {$2$};
\node at (3,-0.6) {$3$};
\node at (4,-0.6) {$4$};
\fill[white] (4,3.5) circle (3pt);
\node at (5,-0.6) {$5$};
\node at (0,-0.6) {$0$};
\node at (6,-0.6) {$6$};
\fill[red] (1,0.5) circle (3pt);
\fill[red] (0,0.5) circle (3pt);
\fill[red] (2,0.5) circle (3pt);
\end{tikzpicture}
\begin{tikzpicture}[framed,scale=0.5,every node/.style={scale=0.9}]
\draw (0,0.1) -- (0,-0.1);
\draw (1,0.1) -- (1,-0.1);
\draw (2,0.1) -- (2,-0.1);
\draw (3,0.1) -- (3,-0.1);
\draw (4,0.1) -- (4,-0.1);
\draw (5,0.1) -- (5,-0.1);
\draw (6,0.1) -- (6,-0.1);
\draw (-0.5,0) -- (6.5,0);
\node at (1,-0.6) {$1$};
\node at (2,-0.6) {$2$};
\node at (3,-0.6) {$3$};
\node at (4,-0.6) {$4$};
\node at (5,-0.6) {$5$};
\node at (0,-0.6) {$0$};
\node at (6,-0.6) {$6$};
\fill[white] (4,3.5) circle (3pt);
\fill[red] (2,1.0) circle (3pt);
\fill[red] (2,1.5) circle (3pt);
\fill[red] (2,2.0) circle (3pt);
\fill[red] (2,0.5) circle (3pt);
\fill[red] (4,1.0) circle (3pt);
\fill[red] (4,1.5) circle (3pt);
\fill[red] (4,0.5) circle (3pt);
\fill[red] (1,0.5) circle (3pt);
\fill[red] (3,0.5) circle (3pt);
\end{tikzpicture}
\begin{tikzpicture}[framed,scale=0.5,every node/.style={scale=0.9}]
\draw (0,0.1) -- (0,-0.1);
\draw (1,0.1) -- (1,-0.1);
\draw (2,0.1) -- (2,-0.1);
\draw (3,0.1) -- (3,-0.1);
\draw (4,0.1) -- (4,-0.1);
\draw (5,0.1) -- (5,-0.1);
\draw (6,0.1) -- (6,-0.1);
\draw (-0.5,0) -- (6.5,0);
\node at (1,-0.6) {$1$};
\node at (2,-0.6) {$2$};
\node at (3,-0.6) {$3$};
\node at (4,-0.6) {$4$};
\fill[white] (4,3.5) circle (3pt);
\node at (5,-0.6) {$5$};
\node at (0,-0.6) {$0$};
\node at (6,-0.6) {$6$};
\fill[red] (0,0.5) circle (3pt);
\fill[red] (6,0.5) circle (3pt);
\fill[red] (0,1.0) circle (3pt);
\fill[red] (0,1.5) circle (3pt);
\fill[red] (0,2.0) circle (3pt);
\fill[red] (0,2.5) circle (3pt);
\end{tikzpicture}
\caption{Four draws from a point process on the natural numbers.\label{nat-example}}
\end{figure}
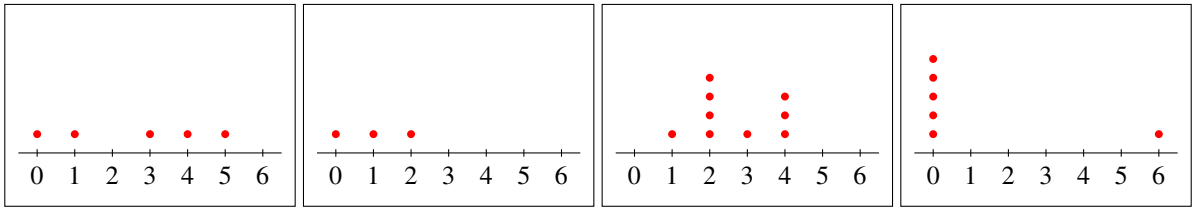

\paragraph{Broader context.}
The broader context of this work is the idea that category theory can be a language for organizing the structure of statistical models. At one end of the spectrum, this line of work involves a foundational categorical analysis (e.g.~\cite{dirichlet-is-natural,cones,fritz, support-morphism,garner,quasiborel,jacobs-book,mccullagh,simpson}). At the other end of the spectrum is `probabilistic programming’, a popular method of statistical modelling using programs (e.g.~\cite{church,smc,hakaru}); in many instances this is functional programming and so heavily inspired by category theory. This full spectrum of work plays a foundational role in probability and statistics, but also addresses a practical grand challenge of interpretability in statistical models, since category theory and programming allow us to clearly organize the structure of complicated statistical models, via composition. 

Our work here appears to be the first work on point processes in this context. However, point processes are widely used in statistical models in practice. We mention two programming styles whose relationship to point processes has only recently become evident:
\begin{itemize}
\item  probabilistic logic programming, in the style of \textsc{Blog}~\cite{russell}, is about describing random sets of points;
\item probabilistic databases are random bags of records~\cite{grohe-lindner}.
\end{itemize}
In future work we intend to set out in more detail how our categorical development in this paper can be used to inform the practice of probabilistic logic programming and probabilistic databases in a compositional way.


\section{Mathematical preliminaries}
\label{sec:prelims}
We recall basic measure theory, which is the standard formulation for probability theory over uncountable spaces (e.g.~\cite{measure-prob}). 
\subsection{Measure theory}
\begin{definition}
A \emph{$\sigma$-algebra} on a set $X$ is a nonempty family $\Sigma_X$ of subsets
of $X$ that is closed under complements and countable unions. 
The pair $(X,\Sigma_X)$ is called a \emph{measurable space} (we just write $X$ when 
$\Sigma_X$ can be inferred from context).

Given $(X,\Sigma_X)$, a \emph{measure} is a function
$\nu: \Sigma_X \rightarrow \mathbb{R}^\infty_+$ such that
for all countable collections of disjoint sets $A_i \in \Sigma_X$,
$\nu\left(\bigcup_i A_i\right) = \sum_i \nu(A_i).$
In particular,  
$\nu(\varnothing) = 0$. It is a \emph{probability measure} if $\nu(X)=1$.
A \emph{pre-measure} is defined to be  this additivity condition
except that it is not necessarily defined on a $\sigma$-algebra. 
\end{definition}

\begin{examples}
The Borel sets form the least $\sigma$-algebra $\Sigma_\mathbb{R}$ of subsets of $\mathbb{R}$
that contain the intervals $(a,b)$.

\noindent
On a countable set $X$, such as $\mathbb N$ or the one-point set $\mathbbm 1 = \{ \star \}$, we will typically consider the \emph{discrete} $\sigma$-algebra, which contains all the subsets. 
In this context, the measures are entirely determined by their values on singletons, $\nu(\{x\})$, and so a measure is the same thing as a function $X\to \mathbb R^\infty_+$. 
\end{examples}
\begin{definition}
Let $(X,\Sigma_X)$ and $(Y,\Sigma_Y)$ be two measurable spaces. A \emph{measurable function}
$f : X \rightarrow Y$ is a function such that $f^{-1}(U) \in \Sigma_X$ when $U \in \Sigma_Y$.
The category {\bf Meas} contains as objects measurable spaces with the morphisms being
measurable functions between them.

For any measurable function $f:X\to \mathbb R^\infty_+$, and any measure $\nu:\Sigma_X\to \mathbb R^\infty_+$, we can define the \emph{Lebesgue integral} or \emph{expected value} $\int_X f\,\diff\nu$ of~$f$. 
\end{definition}
\begin{definition}
A measurable space $(X,\Sigma_X)$ is a \emph{standard Borel space} if it is either measurably
isomorphic to $(\mathbb{R},\Sigma_\mathbb{R})$ or it is countable and discrete.
\end{definition}
\noindent

\subsection{Giry monad}
The Giry functor $G: \text{\bf Meas} \rightarrow \text{\bf Meas}$ sends a measurable space to
the space of all possible probability measures on it~\cite{giry}. By slight abuse of
notation, let $G(X,\Sigma_X) := (GX,\Sigma_{GX})$.
$GX$ is the set of all probability measures $\nu : \Sigma_X \rightarrow [0,1]$ on $X$ 
equipped with the $\sigma$-algebra
$\Sigma_{GX}$ generated by the set of all evaluation maps
$\text{ev}_U : GX \rightarrow [0,1]$, sending $\nu$ to $\nu(U)$
(where $U \in \Sigma_X$). In other words, it is generated by sets of probability
measures $D^U_{I} = \{\nu : \Sigma_X \rightarrow [0,1] \mid \nu(U) \in I \}$.
$$\Sigma_{GX} = \sigma(\{ D^U_{I} \mid U \in \Sigma_X, I \in \mathcal{B}([0,1])\})$$
where $\mathcal{B}([0,1])$ is the Borel $\sigma$-algebra and $\sigma$ is the closure
operator which when given a family of subsets generates the required $\sigma$-algebra
by closing the family under countable unions and complements.
The following unit $\eta^G_X : X \to GX$ and multiplication $\mu^G_X : GGX \to GX$ make $G$ into a monad.
\begin{align*}
\eta^G_X &(x) = \delta_x 
= \lambda U\ldotp
  \begin{cases} 
   1 & \text{if } x \in U \\
   0 & \text{otherwise\footnotemark}
 \end{cases}
       &
\mu^G_X &(\nu)= \lambda U\ldotp \int_{GX} \text{ev}_U\ \diff\nu
\end{align*}
\footnotetext{Throughout this paper we use $\lambda$-notation to describe functions between sets.
  Note that the category $\mathbf{Meas}$ is not cartesian closed and so this is not intended as a formal internal language (c.f.~\cite{quasiborel}).}
Given measurable $f : X \rightarrow Y$, the functorial action $Gf: GX \rightarrow GY$ sends 
$\nu \in GX$ to the \emph{push-forward} measure of $\nu$ along $f$,
$\nu \circ f^{-1} \in GY$.
An important property of push-forward measures is the change-of-variables formula
$\int_Y g\, \diff (Gf)(\nu) = \int_X (g \circ f)\, \diff \nu$ (where $g : Y \to [0,1]$).

\subsection{All-measures monad}
\label{sec:allmeasuresmonad}
The all-measures functor $M : \textbf{Meas} \to \textbf{Meas}$ is defined similarly to the
Giry functor. It sends measurable spaces $(X,\Sigma_X)$ to the space of all 
measures $(MX,\Sigma_{MX})$ where $MX$ is the set of measures
$\mu : \Sigma_X \to \mathbb{R}^\infty_+$ and
$\Sigma_{MX} = \sigma(\{m^U_r \mid U \in \Sigma_X, r \in \mathbb{R}\})$ with
$m^U_r = \{ \mu: \Sigma_X \to \mathbb{R}^\infty_+ \mid \mu(U) < r\}$.
The same unit and multiplication maps make $M$ into a monad.
(Warning: The Giry monad is strong and commutative, but the all-measures monad $M$ is not strong, because Fubini's Theorem does not hold for arbitrary measures. So something more refined is needed for functional programming, but that is not an issue for this paper; see e.g.~\cite{sfinite} for details.)


\section{A monad for finite bags in {\bf Meas}}
\label{sec:bag}
In this Section we discuss the construction of the bag monad in the context of measure
theory. 
A bag (aka multiset) is a finite unordered list of elements in some set. For example,
the bag $[8,8,5,8,5]$ contains 5 twice and 8 three times, and can also be written as
$[5,5,8,8,8]$ since bags are unordered.
We begin by recalling the bag endofunctor $B$ in {\bf Set}, which we show to lift to 
an endofunctor
in {\bf Meas} by assigning the $\sigma$-algebra $\Sigma_{BX}$ to the space of bags in
$\S$~\ref{bag-functor} (Definition~\ref{bag-def}).
In $\S$~\ref{bag-monad} we prove that $B$, which is a monad in {\bf Set}, lifts to a monad
in {\bf Meas} by showing that the unit and multiplication maps extend to measurable functions.
Later in $\S$~\ref{pp-monad} we will need to define probability measures on $BX$, i.e.,
functions $\Sigma_{BX} \to [0,1]$. Defining such functions entails having to define
them on arbitrary combinations of unions and intersections of our generating sets $A^U_k$.
We simplify this task by making use of Carath\'{e}odory's extension theorem in
$\S$~\ref{sigma-bx} to show that it
suffices to simply define these functions on the generating sets of $\Sigma_{BX}$ without
needing to define them on all of $\Sigma_{BX}$.

\subsection{The bag functor in {\bf Set} and {\bf Meas}}\label{bag-functor}
Consider the well-known finite bag endofunctor $B : \textbf{Set} \to \textbf{Set}$ where
$BX$ is the set of all finite bags with elements of $X$. Given a function $f : X \to Y$,
the function $Bf : BX \to BY$ applies $f$ component-wise to its argument bag. The
natural transformations $\eta^B_X : X \to BX$ and $\mu^B_X : BBX \to BX$ which 
return the singleton bag and the (multiplicity respecting) union of bags
respectively make $(B,\eta^B,\mu^B)$ into a monad.
\begin{align*}
\eta^B_X &(x) = [x]          & \mu^b_X &([b_1,\dots,b_n]) = \bigcup_i b_i
\end{align*}
In order to lift $B : \textbf{Set} \to \textbf{Set}$ to $B : \textbf{Meas} \to \textbf{Meas}$
we equip a $\sigma$-algebra $\Sigma_{BX}$ to our set $BX$.

\begin{definition}[Measurable space of bags]\label{bag-def}
Let $BX$ be the set of bags on the measurable space $X$. Equip $BX$ with the $\sigma$-algebra $\Sigma_{BX}$
formed by the $\sigma$-closure of generating sets
$A^U_k = \{ b \in BX \mid \text{$b$ contains exactly}$ $\text{$k$ elements in $U$} \}$.
$$\Sigma_{BX} = \sigma(\{ A^U_k \mid U \in \Sigma_X, k \in \mathbb N\})$$
Then $(BX,\Sigma_{BX})$ is the measurable space of bags of $X$.
\end{definition}

It is important to note that the generating set $A^U_k \in \Sigma_{BX}$ contains bags
of $X$ of cardinality \emph{at least} $k$, as each bag in it contains $k$ elements in $U$,
in addition to possible other elements \emph{not} in $U$. The set $A^X_k$, on the other hand,
contains all the bags of $X$ of cardinality \emph{exactly} $k$ (since $X$ is our universal
set). Their intersection $A^X_n \cap A^U_k$ is then the set of
\emph{bags of cardinality $n$ with $k$ elements
in $U$}. This can be extended to construct
the set of \emph{bags of cardinality $n$ containing $k_i$ elements in $U_i$} for some family
of sets $U_i \in \Sigma_X$, which is then the intersection
$A^X_n \cap \left(\bigcap_i A^{U_i}_{k_i} \right)$.

\subsection{The bag monad in {\bf Meas}}\label{bag-monad}
\begin{lemma}\label{bag-mon-lem}
The unit and multiplication maps $\eta^B_X : X \rightarrow BX$ and
$\mu^B_X : BBX \rightarrow BX$ are measurable.
\end{lemma}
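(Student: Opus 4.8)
The plan is to use the standard principle that a function into a space whose $\sigma$-algebra is generated by a family $\mathcal G$ is measurable as soon as the preimage of every member of $\mathcal G$ is measurable. Indeed, for any $f$ the collection $\{S \mid f^{-1}(S)\text{ is measurable}\}$ is closed under complements and countable unions and contains the whole space, hence is a $\sigma$-algebra, so once it contains $\mathcal G$ it contains $\sigma(\mathcal G)$. Since $\Sigma_{BX}$ is generated by the sets $A^U_k$, it therefore suffices to show $(\eta^B_X)^{-1}(A^U_k) \in \Sigma_X$ and $(\mu^B_X)^{-1}(A^U_k) \in \Sigma_{BBX}$ for all $U \in \Sigma_X$ and $k \in \mathbb N$.

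For the unit the computation is routine. The singleton bag $[x]$ has exactly one element, so it contains exactly $k$ elements of $U$ only when $k \in \{0,1\}$. Hence $(\eta^B_X)^{-1}(A^U_0) = X \setminus U$, $(\eta^B_X)^{-1}(A^U_1) = U$, and $(\eta^B_X)^{-1}(A^U_k) = \varnothing$ for $k \ge 2$, each of which lies in $\Sigma_X$.

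The substance is the multiplication map, and the key observation is that the number of elements of $U$ is \emph{additive} under the multiplicity-respecting union: if $\beta = [b_1,\dots,b_n] \in BBX$, then $\mu^B_X(\beta) = \bigcup_i b_i$ contains exactly $\sum_i c_i$ elements of $U$, where $c_i$ is the number of elements of $b_i$ lying in $U$. I would therefore classify a bag-of-bags $\beta$ by the data $n_j = $ the number of its component bags lying in $A^U_j$ (i.e.\ having exactly $j$ elements in $U$); then $\mu^B_X(\beta) \in A^U_k$ precisely when $\sum_{j \ge 1} j\,n_j = k$. Crucially, $A^U_j$ is itself a measurable subset of $BX$, so ``$\beta$ has exactly $m$ components in $A^U_j$'' is the generating set $A^{(A^U_j)}_m$ of $\Sigma_{BBX}$. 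Spelling out the condition $\sum_{j\ge 1} j n_j = k$ — noting that it forces $n_j = 0$ for all $j > k$ while leaving $n_0$ (components disjoint from $U$) completely free — gives
\[
(\mu^B_X)^{-1}(A^U_k) = \bigcup_{\substack{(n_1,\dots,n_k)\in\mathbb N^k\\ \sum_{j=1}^k j\,n_j = k}} \Bigl( \bigcap_{j=1}^{k} A^{(A^U_j)}_{n_j} \Bigr) \cap \Bigl( \bigcap_{j > k} A^{(A^U_j)}_0 \Bigr).
\]
The outer union ranges over the finitely many partitions of $k$, the middle intersection is finite, and the last intersection is countable, so the right-hand side lies in $\Sigma_{BBX}$.

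The main obstacle I anticipate is exactly this bookkeeping for $\mu^B_X$: one must recognize that an event about the summed bag translates into an event about the multiplicities of components falling into the measurable pieces $A^U_j$, and then verify that the resulting set-theoretic description uses only countable unions, countable intersections, and the legitimate generators $A^{(A^U_j)}_m$. In particular, the two points to check are that the infinitely many indices $j > k$ contribute only a countable intersection and that leaving the parameter $n_0$ unconstrained does not break countability. Everything else is a direct computation.
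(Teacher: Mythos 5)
Your argument is correct, and it rests on the same key idea as the paper's proof: reduce measurability to preimages of the generators $A^U_k$, handle the unit by direct computation (your case analysis $\overline U$, $U$, $\varnothing$ is exactly the paper's), and for the multiplication translate the event ``the merged bag has exactly $k$ elements in $U$'' into conditions on how many components of the outer bag fall into each class $A^U_j \subseteq BX$, recognizing these conditions as the generators $A^{(A^U_j)}_m$ of $\Sigma_{BBX}$. Where you differ is in the bookkeeping. The paper stratifies $({\mu^B_X})^{-1}(A^U_k)$ by the cardinality $n$ of the outer bag, obtaining a countably infinite disjoint union of finite intersections (its classes $\langle 3,1,0\rangle = A^{BX}_3 \cap A^{\mathbb{B}_3}_1 \cap A^{\mathbb{B}_1}_1 \cap A^{\mathbb{B}_0}_1$ with $\mathbb{B}_i = A^U_i$, repeated for every cardinality), and it only sketches the general case, working out $k=4$ explicitly. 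You instead leave the number $n_0$ of components disjoint from $U$ unconstrained and impose $n_j=0$ for $j>k$, which yields a single closed formula: a finite union over the partitions of $k$ of countable intersections of generators. Both decompositions are legitimate $\sigma$-algebra operations, so both succeed; yours is arguably cleaner and more complete than the paper's sketch, at the cost of the countable intersection tail $\bigcap_{j>k} A^{(A^U_j)}_0$, which you rightly flag as harmless (and which could even be collapsed to the single generator $A^{V}_0$ with $V = BX \setminus \bigcup_{j\le k} A^U_j \in \Sigma_{BX}$ if one wanted a finite expression).
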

\begin{proof}
To prove the measurability of these functions it suffices to show that the pre-images
of the generating sets $A^U_k$ are measurable.
Consider $U \in \Sigma_{X}$ and some $A^U_k \in \Sigma_{BX}$. 
The inverse image map of the unit ${\eta^B_X}^{-1}(A^U_k)$ evaluates to $\overline U$
(the complement of $U$)
if $k=0$, $U$ if $k=1$, and $\varnothing$ otherwise, all of which are elements of 
$\Sigma_X$, and so ${\eta^B_X}^{-1}$ is measurable.
We now sketch why ${\mu^B_X}^{-1}(A^U_k) \in \Sigma_{BBX}$ and later show a detailed
argument for the case $k=4$. Call this set $\mathcal P$.
By definition of the inverse image map, $\mathcal P$ is the collection of bags of bags
such that the arbitrary union of each bag of bags contains exactly $k$ elements in $U$. 
Recall that the set of \emph{bags of cardinality $n$ containing $k_i$ elements in $U_i$}
for some family of sets $U_i \in \Sigma_X$, is given by
$A^X_n \cap \left(\bigcap_i A^{U_i}_{k_i} \right)$.
By using this technique of describing collections of bags
 and considering the various partitions of the number $k$ such that the resulting arbitrary
union of bags will contain $k$ elements in $U$, we can express $\mathcal P$ entirely
using measurable sets, allowing us to conclude that $\mu^B_X$ is a measurable function.
\end{proof}

\begin{example}
Let $\mathcal P = {\mu^B_X}^{-1}(A^U_4)$. 4 can be
partitioned in five ways:
$\{4, 3+1, 2+2, 2+1+1, 1+1+1+1\}.$
$\mathcal P$ is the set of bags of bags such that the arbitrary union of each
bag of bags contains exactly 4 elements in $U$. We start by considering elements
of this set based on their cardinalities.
\begin{itemize}
\item There is only one collection of bags of cardinality 1 which are members of $\mathcal P$.
These are the bags which contain a single bag which in turn contains 4 elements in $U$.
Denote this collection as $\langle 4 \rangle$.
\item There are three collections of bags of cardinality 2 which are members of $\mathcal P$.
The first contains two bags which have 4 and 0 elements in $U$, 
the second with 3 and 1 elements in $U$, and the third with
2 and 2 elements in $U$, respectively.  We write them as $\langle 4,0 \rangle$, 
$\langle 3,1 \rangle$, and $\langle 2,2 \rangle$.
\item Cardinality 3: $\langle 4,0,0 \rangle$, 
$\langle 3,1,0 \rangle$, $\langle 2,2,0 \rangle$, and $\langle 2,1,1 \rangle$.
\item Cardinality 4: $\langle 4,0,0,0 \rangle$, 
$\langle 3,1,0,0 \rangle$, $\langle 2,2,0,0 \rangle$, $\langle 2,1,1,0 \rangle$, 
$\langle 1,1,1,1 \rangle$.
\item Cardinality 5: $\langle 4,0,0,0,0 \rangle$, 
$\langle 3,1,0,0,0 \rangle$, $\langle 2,2,0,0,0 \rangle$, $\langle 2,1,1,0,0 \rangle$,  
$\langle 1,1,1,1,0 \rangle$. And so on.
\end{itemize}
Each collection is definable using the generating sets, and the collections of
different cardinalities are mutually disjoint. For example,
$\langle 3,1,0 \rangle = A^{BX}_3 \cap (A^{\mathbb{B}_3}_1 \cap A^{\mathbb{B}_1}_1 \cap
A^{\mathbb{B}_0}_1)$ and 
$\langle 2,1,1,0 \rangle = A^{BX}_4 \cap (A^{\mathbb{B}_2}_1 \cap A^{\mathbb{B}_1}_2 \cap
A^{\mathbb{B}_0}_1)$ 
 where $\mathbb{B}_i = A^{U}_i$. 
Finally, $\mathcal P$ is the union of all these disjoint collections.
$$\mathcal P = {\mu^B_X}^{-1}(A^U_4) = \langle 4 \rangle \cup \langle 4,0 \rangle \cup \langle 3,1 \rangle
\cup \langle 2,2 \rangle \cup  \langle 4,0,0 \rangle \cup \dots$$
\end{example}

\begin{theorem}
$(B : \textrm{\bf Meas} \to \textrm{\bf Meas},\eta^B,\mu^B)$ is a monad.
\end{theorem}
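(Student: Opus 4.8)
The plan is to leverage what has already been established and to recognize that very little genuinely new work remains. We already know three things: $(B,\eta^B,\mu^B)$ is a monad on $\mathbf{Set}$; the space $(BX,\Sigma_{BX})$ is a well-defined measurable space (Definition~\ref{bag-def}); and the unit $\eta^B_X$ and multiplication $\mu^B_X$ are measurable (Lemma~\ref{bag-mon-lem}). So the only gap is to check that $B$ is a genuine \emph{endofunctor} on $\mathbf{Meas}$, i.e.\ that its action on morphisms sends measurable functions to measurable functions, and then to observe that naturality and the monad laws transport from $\mathbf{Set}$ for free.

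The one computation I would actually carry out is the measurability of the functorial action. Given a measurable $f:X\to Y$, I claim that for every generating set $A^U_k\in\Sigma_{BY}$ we have
\[
  (Bf)^{-1}(A^U_k) = A^{f^{-1}(U)}_k\text.
\]
To see this, note that $Bf$ applies $f$ componentwise, so for a bag $b\in BX$ the image $Bf(b)$ contains exactly $k$ elements in $U$ (counted with multiplicity) if and only if $b$ contains exactly $k$ elements in $f^{-1}(U)$; that is, $Bf(b)\in A^U_k$ iff $b\in A^{f^{-1}(U)}_k$. Since $f$ is measurable, $f^{-1}(U)\in\Sigma_X$, and hence $A^{f^{-1}(U)}_k\in\Sigma_{BX}$ by Definition~\ref{bag-def}. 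As the preimage of every generator of $\Sigma_{BY}$ is measurable, $Bf$ is measurable. Functoriality itself—$B(\mathrm{id})=\mathrm{id}$ and $B(g\circ f)=Bg\circ Bf$—holds on underlying sets, so it holds in $\mathbf{Meas}$.

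For the remaining structure, I would invoke the faithful forgetful functor $\mathbf{Meas}\to\mathbf{Set}$. The underlying set-functions of $\eta^B_X$, $\mu^B_X$, and each $Bf$ in $\mathbf{Meas}$ are literally the components of the known $\mathbf{Set}$ monad. The naturality squares for $\eta^B$ and $\mu^B$, together with the left unit, right unit, and associativity equations, are all equalities of functions that already hold in $\mathbf{Set}$; since they are equalities of the same underlying maps, they hold verbatim between the corresponding measurable functions. Measurability of every map involved has now been verified—$\eta^B$ and $\mu^B$ by Lemma~\ref{bag-mon-lem}, and $Bf$ by the paragraph above—so the diagrams commute in $\mathbf{Meas}$.

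I do not expect a serious obstacle here: the genuinely delicate step, measurability of $\mu^B$, was already dispatched in Lemma~\ref{bag-mon-lem}. The only real content left is the preimage identity $(Bf)^{-1}(A^U_k)=A^{f^{-1}(U)}_k$, and even that is routine once one observes that counting elements of $b$ in $f^{-1}(U)$ is the same as counting elements of $Bf(b)$ in $U$. Everything else is the standard principle that a monad structure on $\mathbf{Set}$ lifts along a faithful forgetful functor as soon as the functor lifts and the unit and multiplication are shown to be morphisms in the target category.
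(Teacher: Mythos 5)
Your proof is correct and follows essentially the same route as the paper's, which simply observes that the monad laws hold as in $\mathbf{Set}$ and that $\eta^B$ and $\mu^B$ are measurable by Lemma~\ref{bag-mon-lem}. The only thing you add is the explicit check that $Bf$ is measurable via the identity $(Bf)^{-1}(A^U_k)=A^{f^{-1}(U)}_k$, a point the paper leaves implicit when asserting that $B$ lifts to an endofunctor on $\mathbf{Meas}$ — a welcome bit of rigour, but not a different argument.
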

\begin{proof}
The monad laws hold as in {\bf Set}. Furthermore, $\eta^B$ and $\mu^B$ are measurable 
(Lemma~\ref{bag-mon-lem}).
\end{proof}

\subsection{Defining measures on ${BX}$}\label{sigma-bx}
In this Section we construct a \emph{ring} -- a set of sets containing the empty set
closed under pairwise unions and relative complements -- of the generating sets $A^U_k$ of
$\Sigma_{BX}$ in order to invoke Carath\'{e}odory's extension theorem. This allows us
to define measures by defining them on just specific unions of intersections of
the sets $A^U_k$ of $\Sigma_{BX}$ rather than having to define them on all the arbitrary
combinations of unions and intersections of these sets.

\begin{theorem}[Carath\'{e}odory's extension]
Let $\mathcal R$ be a ring and $\nu : \mathcal R \to \mathbb{R}^\infty_+$ be a pre-measure.
Then there exists a measure $\tilde\nu : \sigma(\mathcal R) \to \mathbb{R}^\infty_+$ such
that $\tilde\nu(S) = \nu(S)$ for all $S \in \mathcal R$.
\end{theorem}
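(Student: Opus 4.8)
The plan is to follow the classical Carathéodory construction: manufacture an outer measure from the pre-measure~$\nu$, carve out the $\sigma$-algebra of sets that split additively against it, and then verify that this $\sigma$-algebra contains $\mathcal R$ while the outer measure restricts to $\nu$ on $\mathcal R$. Since this is a standard result about rings of sets, the work is in assembling the pieces rather than inventing them.

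First I would define the induced outer measure on all subsets of $X$ by
\[
  \mu^*(A) = \inf\left\{ \textstyle\sum_{n} \nu(R_n) \;:\; A \subseteq \bigcup_n R_n,\ R_n \in \mathcal R \right\},
\]
taking the infimum over countable covers of $A$ by ring elements (and setting $\mu^*(A)=\infty$ when no such cover exists). A short check then shows $\mu^*(\varnothing)=0$, that $\mu^*$ is monotone, and that it is countably subadditive, the last by the standard $\varepsilon/2^n$ argument splicing near-optimal covers of each $A_n$ into a cover of $\bigcup_n A_n$.

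Next I would invoke the Carath\'eodory measurability criterion: call $E \subseteq X$ \emph{measurable} if $\mu^*(A) = \mu^*(A \cap E) + \mu^*(A \cap \overline E)$ for every test set $A$, and write $\mathcal M$ for the collection of such sets. The core lemma is that $\mathcal M$ is a $\sigma$-algebra on which $\mu^*$ is countably additive. Closure under complements is immediate from the symmetry of the defining equation; closure under finite unions follows by applying the splitting identity twice; and countable additivity together with closure under countable unions follows by combining finite additivity on $\mathcal M$ with subadditivity of $\mu^*$ and a monotone-limit argument. I expect this step to be the main obstacle, as it is the genuinely nontrivial combinatorial heart of the theorem.

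Finally I would reconnect the abstract construction to $\nu$ and $\mathcal R$. To see $\mathcal R \subseteq \mathcal M$ one uses finite additivity of the pre-measure to verify the splitting identity for each $R \in \mathcal R$, whence $\sigma(\mathcal R) \subseteq \mathcal M$. To see $\mu^*(R) = \nu(R)$ for $R \in \mathcal R$ one argues in both directions: the inequality $\mu^*(R) \le \nu(R)$ is trivial via the single-element cover $\{R\}$, while the reverse uses countable additivity of the pre-measure $\nu$ to dominate an arbitrary cover. Setting $\tilde\nu = \mu^*|_{\sigma(\mathcal R)}$ then yields a measure on $\sigma(\mathcal R)$ agreeing with $\nu$ on $\mathcal R$, as required.
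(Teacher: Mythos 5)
Your outline is the classical Carath\'eodory outer-measure construction and is correct: define $\mu^*$ by infima over countable covers from $\mathcal R$, show the Carath\'eodory-measurable sets form a $\sigma$-algebra containing $\mathcal R$ on which $\mu^*$ is countably additive, and check agreement with $\nu$ on $\mathcal R$ using countable additivity of the pre-measure. The paper states this theorem as a standard citation and gives no proof of its own, so there is nothing to compare against; your argument is precisely the canonical textbook proof of the result being invoked.
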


We start by defining $\mathcal{R'}$ to be the set of countable intersections of our generating sets above such that their base sets are mutually disjoint.
$$\mathcal{R'} = \left \{ \bigcap_i A^{U_i}_{k_i} \middle \vert U_i \in \Sigma_X, \text {$U_i$'s mutually disjoint}, k_i \in \mathbb{N} \right \}$$
Now define $\mathcal{R}$ to be the closure of $\mathcal{R'}$ under countable unions.
The elements of $\mathcal{R}$ are the unions of intersections of certain generating sets. In particular, any $P \in \mathcal{R}$ can be expressed as
$P = \bigcup_i \bigcap_j A^{U_{i,j}}_{k_{i,j}}.$

\begin{examples}
One example of such a set $P$ is
$(A^\alpha_0 \cap A^\beta_2) \cup (A^\beta_2 \cap A^\gamma_1) \cup (A^\alpha_2 \cap A^\beta_5 \cap A^\delta_2)$
where $\alpha, \beta, \gamma, \delta \in \Sigma_X.$ Note that although $\alpha$ and $\beta$, $\beta$ and $\gamma$, and $\alpha$, $\beta$, and $\delta$ are all mutually disjoint (by definition of $\mathcal{R'}$), it is still possible for $\alpha$ and $\gamma$ to overlap. Using standard set theoretic identities we can redefine $P$ in terms of $\alpha \setminus \gamma \text{ (instead of just $\alpha$)}, \beta, \gamma, \delta$ for \emph{all} the base sets across the unions to be mutually disjoint. 

\noindent
Consider also the set
$(A^\alpha_1 \cap A^\beta_1) \cup (A^\alpha_0 \cap A^\gamma_2)$
where $\alpha, \beta, \gamma \in \Sigma_X$ are mutually disjoint. We can rewrite $A^\alpha_1 \cap A^\beta_1$ as $\bigcup_i (A^\alpha_1 \cap A^\beta_1 \cap A^\gamma_i)$ since $\bigcup_i A^\gamma_i$ is simply the universal set. The right half of the set above can similarly be rewritten, enabling us to reformulate it as
$\bigcup_i (A^\alpha_1 \cap A^\beta_1 \cap A^\gamma_i) \cup \bigcup_i (A^\alpha_0 \cap A^\beta_i \cap A^\gamma_2).$
\end{examples}

From the two examples above, we can assume without loss of generality that an arbitrary element $P \in \mathcal{R}$ will be of the form
$\bigcup_i \bigcap_j A^{U_j}_{k_{i,j}}$
such that all the $U_j$'s are mutually disjoint.
Finally, note that any two sets
$\bigcap_j A^{U_j}_{k_{m,j}}$ \text{and} $\bigcap_j A^{U_j}_{k_{n,j}}$
are disjoint unless for all $j$, $k_{m,j} = k_{n,j}$. And so, every $P$ can be viewed as the disjoint union of a set of sets. Call this the \emph{disjoint normal form} (it is not unique).

\begin{lemma} $\mathcal R$ contains the empty set and is closed under pairwise
unions and relative complements.
\end{lemma}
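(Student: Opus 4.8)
The plan is to dispose of the empty set and pairwise unions immediately and then concentrate on relative complements, which is the only substantial closure property. For the empty set, observe that $A^{\varnothing}_1 = \varnothing$, since no bag contains exactly one element of the empty region; as a single-factor intersection over the (trivially mutually disjoint) family $\{\varnothing\}$, this already lies in $\mathcal{R'} \subseteq \mathcal{R}$. For pairwise unions, recall that $\mathcal{R}$ is by construction the closure of $\mathcal{R'}$ under countable unions: if $P, Q \in \mathcal{R}$ are each countable unions of members of $\mathcal{R'}$, then so is $P \cup Q$, whence $P \cup Q \in \mathcal{R}$.

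The heart of the argument is closure under relative complement, and my strategy is to rewrite two given sets $P, Q \in \mathcal{R}$ over a \emph{common} disjoint partition of $X$, after which the Boolean bookkeeping becomes routine. Writing $P$ and $Q$ in the disjoint normal form established above, let $\{U_j\}$ and $\{V_{j'}\}$ be the (internally disjoint, countable) families of base sets occurring in $P$ and $Q$ respectively. I would form the common refinement $\{W_s\}$ consisting of all nonempty sets among the intersections $U_j \cap V_{j'}$, the $U_j$-only and $V_{j'}$-only remainders, together with the outside region $W_\ast = X \setminus \bigl(\bigcup_j U_j \cup \bigcup_{j'} V_{j'}\bigr)$. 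This $\{W_s\}$ is a countable family of mutually disjoint measurable sets that partitions all of $X$ and refines both original families.

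The key rewriting step uses two identities. First, whenever a base set $U$ is a disjoint union $U = \bigcup_s W_s$ of some partition pieces, a finite bag has exactly $k$ elements in $U$ iff its element counts in the pieces $W_s$ sum to $k$, so $A^U_k = \bigcup \bigl\{ \bigcap_s A^{W_s}_{m_s} \mid \sum_s m_s = k \bigr\}$, a countable union since each admissible tuple $(m_s)$ is finitely supported. Second, since $\bigcup_{i \in \mathbb{N}} A^{W}_i = BX$ for every region $W$, I may freely insert factors tracking those partition pieces not already constrained by the set being rewritten (in particular the pieces of the outside region). Applying these, both $P$ and $Q$ become countable unions of the basic sets $C_{\mathbf m} \defeq \bigcap_s A^{W_s}_{m_s}$, indexed by finitely-supported profiles $\mathbf m = (m_s)$ ranging over index sets $S_P$ and $S_Q$; each $C_{\mathbf m} \in \mathcal{R'}$, the $C_{\mathbf m}$ are pairwise disjoint, and every finite bag lies in exactly one $C_{\mathbf m}$ because $\{W_s\}$ covers $X$. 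Consequently $P \setminus Q = \bigcup_{\mathbf m \in S_P \setminus S_Q} C_{\mathbf m}$, once more a countable union of members of $\mathcal{R'}$, and therefore an element of $\mathcal{R}$.

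I expect the main obstacle to be making the refinement step fully rigorous rather than the Boolean reasoning: one must verify that the profiles $(m_s)$ needed to describe $P$ and $Q$ remain countable (which holds because finiteness of bags forces finitely-supported profiles), that inserting the free outside factors genuinely preserves the set described, and that the informal disjoint-normal-form manipulation illustrated in the preceding examples can be carried out uniformly over countably many base sets. Once every set is expressed over the single partition $\{W_s\}$, closure under all the required operations reduces to the corresponding operations on the index sets of profiles.
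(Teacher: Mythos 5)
Your proposal is correct and follows essentially the same route as the paper's proof: the paper also handles $\varnothing$ and unions trivially, then argues that $P$ and $Q$ can ``without loss of generality'' be written in disjoint normal form over a \emph{common} family of mutually disjoint base sets, so that $P\setminus Q$ is again a disjoint union of members of $\mathcal{R'}$. Your write-up simply makes explicit the common-refinement construction and the two rewriting identities (splitting $A^U_k$ over a partition of $U$, and inserting the free factors $\bigcup_i A^W_i = BX$) that the paper only illustrates by example before invoking them implicitly.
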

\begin{proof}
It is clear that $\varnothing \in \mathcal{R}$.
The set $\mathcal R$ is by definition closed under countable unions, and so is also closed under
pairwise unions.
Consider $P$ and $Q \in \mathcal{R}$ with their respective disjoint normal forms. 
Without loss of generality, we can express both $P$ and $Q$ using the same set of mutually
disjoint base sets $U_j \in \Sigma_X$. This gives us
$P = \bigcup_i \bigcap_j A^{U_j}_{a_{i,j}}$
and
$Q = \bigcup_i \bigcap_j A^{U_j}_{b_{i,j}}.$
Since both $P$ and $Q$ have been formed by taking the unions of a common set of disjoint sets belonging to $\mathcal{R'}$, their difference $P \setminus Q$ can also be expressed at the disjoint union of sets belonging to $\mathcal{R'}$ and so $P \setminus Q \in \mathcal{R}$.
\end{proof}

Having shown $\mathcal R$ to be a ring, we have by Carath\'{e}odory's extension theorem that
any pre-measure defined on $\mathcal R$ extends to a measure defined on $\sigma(\mathcal R)$.
And so, in order to define a measure on $BX$, it will suffice to define it on sets
$\bigcup_i \bigcap_j A^{U_j}_{k_{i,j}}$. We use this fact in the next Section.

\section{Point process monad}\label{pp-monad}
A point process on a space $X$ is a probability measure on bags of $X$. 
By composing the Giry and bag monads we can define $GBX$ to be the space of point
processes on $X$.  In other words, a point process $\alpha \in GBX$ is a
probability measure $\alpha : \Sigma_{BX} \to [0,1]$ assigning probabilities
to measurable subsets of bags $A^U_k$.
\emph{The probability of observing $k$ points
in the region $U$ of the point process $\alpha$ is then $\alpha(A^U_k)$.}

Earlier we showed that $G$ and $B$ both form monads.
It is well-known that the composition of two monads does not automatically yield a new monad. 
In this Section we prove that the composite functor $GB$ admits a monadic structure by
defining the natural transformation $l : BG \rightarrow GB$,
called the \emph{distributive law} \cite{distr} of $G$ over $B$, such that the
following identities hold.
\begin{align*}
\textbf{(Triangle I)}\ \ \ &
l \circ B\eta^G = \eta^GB
&
G\eta^B = l \circ \eta^BG
\ \ \ &\textbf{(Triangle II)}\\
\textbf{(Pentagon I)}\ \ \ &
l \circ B\mu^G = \mu^GB \circ Gl \circ lG
&
G\mu^B \circ lB \circ Bl = l \circ \mu^BG
\ \ \ &\textbf{(Pentagon II)} 
\end{align*}
This distributive law $l$ then induces the $GB$ monad with the unit defined as the 
horizontal composition $\eta^G \ast \eta^B$, and the join
defined as the composition of the horizontal composition $\mu^G \ast \mu^B$ with $GlB$.
$$ \eta^{GB} : 1 \xrightarrow{\eta^G \ast \eta^B} GB \qquad \text{and} \qquad
\mu^{GB} : GBGB \xrightarrow{GlB} GGBB \xrightarrow{\mu^G \ast \mu^B} GB $$

\subsection{Distributive law}
The distributive law $l_X : BGX \to GBX$ is a function from bags of probability measures
to probability measures on bags.
We showed in $\S$~\ref{sigma-bx} that in order to define a measure on $\Sigma_{BX}$
it suffices to define a pre-measure on sets of the form $\bigcup_i \bigcap_j A^{U_j}_{k_i,j}$. 
Carath\'{e}odory's extension theorem ensures this pre-measure extends to a measure.

In the definition that follows, we consider a bag of probability measures
$[\nu_1,\dots,\nu_n] \in BGX$ and $\bigcup_i \bigcap_j A^{U_j}_{k_i,j} \in \Sigma_{BX}$. 
We define the application of $l[\nu_1,\dots,\nu_n]$ to this disjoint union of intersections
as the sum of products of $l[\nu_1,\dots,\nu_n](A^{U_j}_{k_{i,j}})$. Each of these sub-terms
is in turn defined as
the push-forward of the product measure along $K_n$, a function mapping $n$-tuples to
bags of cardinality $n$. An example follows in (\ref{intuition}).
\begin{align*}
l[\nu_1,\dots,\nu_n]\left(\bigcup_i \bigcap_j A^{U_j}_{k_{i,j}}\right)\quad
&\defeq\quad \sum_i \prod_j l[\nu_1,\dots,\nu_n]\left(A^{U_j}_{k_{i,j}}\right)\\
&\defeq\quad \sum_i \prod_j GK_n(\otimes_i \nu_i)\left(A^{U_j}_{k_{i,j}}\right)
\end{align*}
where $K_n : Y^n \to B_nY$ is the measurable function that sends $n$-tuples
to bags of cardinality $n$ ($B_nY \subseteq BY$).
($K_n$ is measurable since
$K^{-1}_n : \Sigma_{B_nY} \to \Sigma_{Y^n}$ sends sets $A^U_k$
to their corresponding disjoint unions of $n$-products of $U$ and $\overline U$.
For example, $K^{-1}_2(A^U_1) = U \times \overline U\ \uplus\ \overline U \times U$.)
In the definition above $Y$ has been instantiated to be $GX$.

\emph{Intuition:} the term $l[\nu_1,\dots,\nu_n]$ is a point process where the probability
of observing $k$ points in some region $U \in \Sigma_X$ is the probability of observing a
total of $k$ points landing in $U$ after independently sampling a point each from all the $\nu_i$'s $\in GX$.
The following example calculation of $l[\nu_1,\nu_2](A^U_1)$ confirms this idea.
\begin{align}
	\label{intuition}
	\begin{split}
l[\nu_1,\nu_2](A^U_1) &= GK_2(\nu_1 \otimes \nu_2)(A^U_1) 
              = (\nu_1 \otimes \nu_2)(K^{-1}_2(A^U_1))  \\
              &= (\nu_1 \otimes \nu_2)(U \times \overline U\ \uplus\ \overline U \times U) 
              = \nu_1(U)\nu_2(\overline U) + \nu_1(\overline U)\nu_2(U).
	      \end{split}
\end{align}
Note that $l[\nu_1,\dots,\nu_n](A^U_k) = 0$ for $k > n$. Observe that
$\sum_i l[\nu_1,\dots,\nu_n](A^U_i) = 1.$

Showing that $l$ is measurable is a routine calculation, and is made simpler with the
knowledge that sets of constant-cardinality bags are measurable.

We noted earlier that the sets $A^U_m$ contain bags of varying cardinalities. In the following
lemma we show that the measure $l[\nu_1,\dots,\nu_n]$ acts \emph{only} on the subset of
these sets with cardinality $n$. This result is simple yet very useful in providing
an intuitive understanding of
the distributive law, and is instrumental in proving the second pentagon identity.
\begin{lemma}\label{card-lemma} For $[\nu_1,\dots,\nu_n] \in BGX$, 
$l[\nu_1,\dots,\nu_n](A^X_m \cap W) = l[\nu_1,\dots,\nu_n](W)$ if $m=n$ and 0 if $m \not= n$.
\end{lemma}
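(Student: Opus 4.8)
The plan is to show that the probability measure $l[\nu_1,\dots,\nu_n]$ is concentrated entirely on the set $A^X_n$ of bags of cardinality exactly $n$, and then read off both clauses of the lemma by elementary measure-theoretic bookkeeping. The only genuinely new computation is the value of $l[\nu_1,\dots,\nu_n]$ on the constant-cardinality sets $A^X_m$; everything else is additivity and monotonicity of a measure.

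First I would observe that $A^X_m$ is itself one of the generating sets of $\Sigma_{BX}$, namely the instance $U=X$ of $A^U_k$ (legitimate since $X\in\Sigma_X$), so the defining clause of the distributive law applies directly and gives
\[
  l[\nu_1,\dots,\nu_n](A^X_m) \;=\; GK_n(\otimes_i\nu_i)(A^X_m) \;=\; (\otimes_i\nu_i)\bigl(K_n^{-1}(A^X_m)\bigr).
\]
Next I would evaluate $K_n^{-1}(A^X_m)$. Since $K_n\colon X^n \to B_nX$ sends every $n$-tuple to a bag of cardinality exactly $n$, its image meets $A^X_m$ precisely when $m=n$; hence $K_n^{-1}(A^X_m)=X^n$ if $m=n$ and $K_n^{-1}(A^X_m)=\varnothing$ if $m\neq n$. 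Feeding this back in, and using that $\otimes_i\nu_i$ is a probability measure on $X^n$, yields $l[\nu_1,\dots,\nu_n](A^X_m)=1$ when $m=n$ and $0$ otherwise. This is consistent with the earlier observation that $\sum_i l[\nu_1,\dots,\nu_n](A^X_i)=1$.

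Finally I would deduce the two cases from the fact that $l[\nu_1,\dots,\nu_n]$ is a genuine probability measure on all of $\Sigma_{BX}$ (guaranteed by the Carath\'eodory construction of \S\ref{sigma-bx}) whose total mass $1$ is carried by $A^X_n$. For $m\neq n$, monotonicity gives $l[\nu_1,\dots,\nu_n](A^X_m\cap W)\leq l[\nu_1,\dots,\nu_n](A^X_m)=0$. For $m=n$, additivity applied to the partition $W=(W\cap A^X_n)\uplus(W\setminus A^X_n)$, together with $l[\nu_1,\dots,\nu_n](\overline{A^X_n})=1-1=0$ and $W\setminus A^X_n\subseteq\overline{A^X_n}$, forces $l[\nu_1,\dots,\nu_n](A^X_n\cap W)=l[\nu_1,\dots,\nu_n](W)$.

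The main obstacle, such as it is, lies entirely in the single identity $l[\nu_1,\dots,\nu_n](A^X_n)=1$: one must be careful to treat $A^X_m$ as the $U=X$ generator so that the push-forward definition of $l$ applies, and to evaluate $K_n^{-1}(A^X_m)$ correctly from the defining property of $K_n$ (which was already exploited when checking its measurability). Once this value is pinned down, the passage to an arbitrary measurable $W$ is routine and relies only on $l[\nu_1,\dots,\nu_n]$ being a measure on the full $\sigma$-algebra $\Sigma_{BX}$, not merely a set function on the generators.
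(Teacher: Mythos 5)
Your proposal is correct and follows essentially the same route as the paper: both arguments reduce to showing $l[\nu_1,\dots,\nu_n](A^X_m)$ equals $1$ if $m=n$ and $0$ otherwise, and then conclude by standard measure-theoretic bookkeeping. The only cosmetic difference is that you evaluate $K_n^{-1}(A^X_m)$ directly as $X^n$ or $\varnothing$, whereas the paper expands the same quantity into a sum of products containing factors $\nu_j(\overline X)=\nu_j(\varnothing)=0$; your write-up of the final monotonicity/additivity step is in fact slightly more explicit than the paper's.
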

\begin{proof} Consider $l[\nu_1,\dots,\nu_n](A^X_m)$. Using the definition of $l$ this
probability can be expressed as the sum of products of a combination of $\nu_i(X)$ and
$\nu_j(\overline X)$ terms (where $i,j$ range from 1 to $n$). Unless $m=n$, each summand
will contain at least one factor
with $\nu_j(\overline X) = \nu_j(\varnothing) = 0$, nullifying the entire sum. 
It is only non-zero when $m=n$, representing the probability of observing $n$ points
in the entire space, which has probability 1. The measure of any set $A^X_n \cap W$
is then just the measure of $W$. 
\end{proof}

\begin{theorem} $(GB : \textbf{Meas} \to \textbf{Meas},\eta^{GB},\mu^{GB})$ is a monad
via the distributive law $l : BG \to GB$.
\end{theorem}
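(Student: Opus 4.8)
The plan is to invoke Beck's theorem on distributive laws of monads: given the two monads $G$ and $B$ together with a natural transformation $l:BG\to GB$ satisfying the two triangle and two pentagon coherence conditions displayed above, the composite $GB$ is automatically a monad with unit $\eta^G\ast\eta^B$ and multiplication $(\mu^G\ast\mu^B)\circ GlB$. Thus the entire burden of the proof is to verify that the $l$ defined in the previous subsection is genuinely a distributive law, i.e.\ that it is natural and measurable and that it satisfies (Triangle I), (Triangle II), (Pentagon I), and (Pentagon II). Naturality and measurability of $l$ are routine: both follow from the change-of-variables formula for the pushforwards $GK_n$ together with the fact that $K_n$ commutes with the componentwise action of a map. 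Throughout, I will use the reduction from \S\ref{sigma-bx}: since a measure on a bag space is determined by its values on the ring $\mathcal R$, each identity between two measures reduces to an equality of real numbers obtained by evaluating both sides on the generating sets $A^U_k$ and their finite intersections over mutually disjoint regions.

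For the two triangle identities the computation is immediate from the definition of $l$. For (Triangle I), applying $l$ to a bag of Dirac measures $[\delta_{x_1},\dots,\delta_{x_n}]$ pushes the product $\delta_{x_1}\otimes\cdots\otimes\delta_{x_n}=\delta_{(x_1,\dots,x_n)}$ forward along $K_n$, yielding $\delta_{[x_1,\dots,x_n]}=\eta^G_{BX}([x_1,\dots,x_n])$. For (Triangle II), the singleton bag $[\nu]$ is sent by $l$ to the pushforward of $\nu$ along $K_1=\eta^B_X$, which is exactly $G\eta^B(\nu)$.

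The substance of the proof lies in the two pentagon identities, and I expect (Pentagon I) to be the main obstacle. It asserts the coherence of $l$ with the multiplication $\mu^G$: starting from a bag of measures-on-measures in $BGGX$, collapsing the inner $GG$ by $B\mu^G$ and then applying $l$ must agree with the composite $\mu^GB\circ Gl\circ lG$. After the Carath\'eodory reduction this becomes an identity between iterated integrals, and the key tool is the commutativity of the Giry monad (Fubini's theorem, valid here because we deal only with probability measures), which lets me interchange the product-of-measures construction $\otimes_i$ with the averaging performed by $\mu^G$. The care needed here is purely in the bookkeeping: tracking how the pushforward along $K_n$ interacts with both the product structure and the integration, since $K_n$ does not respect products on the nose.

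For (Pentagon II), which governs coherence with $\mu^B$, the essential simplification is Lemma~\ref{card-lemma}: because $l[\nu_1,\dots,\nu_n]$ is supported on bags of cardinality exactly $n$, evaluating either composite on a set $A^U_k$ forces a matching between the cardinalities of the inner bags and the partitions of $k$ that contribute, killing all spurious cross-terms. This reduces the identity to the combinatorial statement that distributing $k$ marked points across the flattened bag of bags (the route $G\mu^B\circ lB\circ Bl$) yields the same total probability as flattening first and then distributing (the route $l\circ\mu^BG$); both are governed by the same product measure over the concatenated family $\nu_1,\dots,\nu_n$. Assembling the four verified axioms and appealing to Beck's theorem then gives the monad structure on $GB$.
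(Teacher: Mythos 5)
Your proposal is correct and follows essentially the same route as the paper's own (sketched) proof: verify the two triangle identities by direct computation, handle (Pentagon I) via the change-of-variables formula for the pushforward $GK_n(\bigotimes_i\nu_i)$ together with standard integration identities, and handle (Pentagon II) by decomposing ${\mu^B_X}^{-1}(A^U_k)$ into constant-cardinality pieces and invoking Lemma~\ref{card-lemma} to discard the measure-zero cross-terms, before concluding by Beck's theorem. The only cosmetic difference is that you make the appeal to commutativity of the Giry monad and the Carath\'eodory reduction explicit where the paper leaves them implicit under ``standard integration identities.''
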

\noindent\textit{Proof (sketch).}
We prove that the four identities for the distributive law hold.
The two triangle identities follow from simple algebra.
For the first pentagon identity
we make use of the change-of-variables formula,
as $l[\nu_1,\dots,\nu_n] = GK_n(\bigotimes_i \nu_i)$ is a
push-forward measure, and prove the resulting equality using standard integration identities.
For the final pentagon identity we are required to work with the set ${\mu^B_X}^{-1}(A^U_k)$,
which we decompose using the method presented in Lemma~\ref{bag-mon-lem}. 
Invoking Lemma~\ref{card-lemma} on these constant-cardinality decompositions
allows us to simplify the resulting expression by removing sets with measure zero 
and prove the final equality after some more algebraic manipulations.
\qed

\subsection{Unit and bind}\label{gb-bind}
The unit $\eta^{GB}$ 
returns the deterministic point process $\eta^{GB}(x)$ with the singular point $x$.
When programming with monads it is often convenient to focus on Kleisli composition
in a stylized form, using the function
$\bind_{GB} : \mathbf{Meas}(X , GBY) \to \mathbf{Meas}(GBX,GBY)$ (pronounced \emph{bind});
we write $\alpha \bind_{GB} f$ for $\bind_{GB}(f)(\alpha)$~\cite{moggi}.

%

\begin{wrapfigure}[6]{r}{8cm}\vspace{-7mm}\begin{tikzpicture}[scale=1.8]
\draw (0,0) -- (0,1);
\draw (0,1) -- (1,1);
\draw (1,1) -- (1,0);
\draw (1,0) -- (0,0);
\fill[orange] (0.2,0.2) circle (1pt);
\fill[blue] (0.8,0.8) circle (1pt);
\node at (0.2,0.35) {\small $x_1$};
\node at (0.65,0.8) {\small $x_2$};
\node at (-0.25,0.5) {$\stackrel{(1)}\mapsto$};
\node at (1.25,0.5) {$\stackrel{(2)}\mapsto$};
\begin{scope}[shift={(1.5,0)}]
\draw (0,0) -- (0,1);
\draw (0,1) -- (1,1);
\draw (1,1) -- (1,0);
\draw (1,0) -- (0,0);
\fill[orange] (0.1,0.3) circle (1pt);
\fill[orange] (0.2,0.1) circle (1pt);
\fill[orange] (0.3,0.2) circle (1pt);
\fill[blue] (0.7,0.6) circle (1pt);
\fill[blue] (0.2,0.8) circle (1pt);
\fill[blue] (0.6,0.9) circle (1pt);
\fill[blue] (0.8,0.8) circle (1pt);
\fill[blue] (0.7,0.3) circle (1pt);
\node at (0.5,0.1) {\scriptsize $f(x_1)$};
\node at (0.5,0.75) {\scriptsize $f(x_2)$};
\node at (1.25,0.5) {$\stackrel{(3)}\mapsto$};
\end{scope}
\begin{scope}[shift={(3.0,0)}]
\draw (0,0) -- (0,1);
\draw (0,1) -- (1,1);
\draw (1,1) -- (1,0);
\draw (1,0) -- (0,0);
\fill[red] (0.1,0.3) circle (1pt);
\fill[red] (0.2,0.1) circle (1pt);
\fill[red] (0.3,0.2) circle (1pt);
\fill[red] (0.7,0.6) circle (1pt);
\fill[red] (0.2,0.8) circle (1pt);
\fill[red] (0.6,0.9) circle (1pt);
\fill[red] (0.8,0.8) circle (1pt);
\fill[red] (0.7,0.3) circle (1pt);
\end{scope}
\end{tikzpicture}
\caption{Sampling from a composite process.\label{fig:bind}}
\end{wrapfigure}
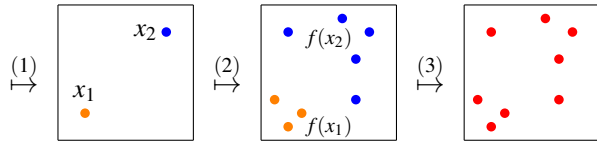%
This presents us
with useful intuition for programming with point processes.
Let $X$ and $Y$ be discrete sets.
Then the process of
sampling points from $\alpha \bind_{GB} f \in GBY$ can be viewed as the
following simulation, illustrated in Figure~\ref{fig:bind}.
(1)~Sample a bag of points from $\alpha$.
(2)~Each point $x_i$
in this bag produces a point process $f(x_i)$ from which we sample a bag of points in $Y$.
(3)~A sample of points from the overall point process is the union of these bags of points in $Y$.
This intuition allows us to declaratively program with point processes by
being able to define them simply by how they must be simulated.
We make extensive use of this intuition in the next Section.
\paragraph{Aside about related work.}
A long-term problem in programming semantics has been the combination of probability and non-determinism (e.g.~\cite{varacca}). In that context, it is well known that there is no distributive law between the probability monad and the powerset monad (e.g.~\cite{varacca,zwart,goy-petrisan}). It has recently become well-known that, in the set-theoretic case, it \emph{is} possible to find a distributive law by using a bag monad instead of a powerset monad (e.g.~\cite[App.~A]{keimel-plotkin}, \cite{parlant,zwart}). This is \emph{not} a sleight of hand, because point processes are not necessarily to be thought of in terms of computational non-determinism. In fact, in the probability theory literature, it is most common to regard point processes as random bags rather than as random sets.

The bag construction is a free commutative monoid, and the free commutative \emph{co}monoid also plays an important role in the theory of linear logic. Recently, bag-like exponentials have arisen in models of probabilistic linear logic \cite{dahlqvist-kozen,hamano}. The precise relationship to our monad and our distributive law remains to be seen.

More broadly, bags, multisets and urns play a fundamental role in statistics and arise at various points in a categorical treatment (e.g.~\cite{jacobs-book,jacobs-staton,jacobs-freq}).


\section{Examples of point processes via the monad}
\label{sec:examples}
\paragraph{Probability distributions as point processes}\label{cpd}
As a first example, we describe probability distributions on the natural numbers as point
processes on the singleton space $\mathbbm{1} = \{ \star \}$, based on the observation that a bag of singletons is a natural number 
($B\mathbbm{1} \cong \mathbb{N}$).
Any probability distribution $d \in G\mathbb{N}$ (so that $\sum_{i=0}^\infty d_i = 1$) can be
presented as a point process $\underline{d} \in GB\mathbbm{1}$ where we observe
$k$ copies of $\star$ with probability $d_k$:
\begin{equation}\underline{d}\left(A^{\{\star\}}_k\right) := d_k\label{eqn:GB1GBN}\end{equation}

\paragraph{Building compound probability distributions.}
Using our monad we define \emph{compound distributions} as point processes on
the unit type. 
A compound probability distribution is the probability distribution of the
sum of a number of independent identically-distributed random variables, where the number
of terms to be added is itself a random variable. For example, given a random variable
$N \sim \text{Poisson}(\Lambda)$ and iid variables $X_i$, the random variable
$Y = \sum_{i=0}^N X_i$
forms a compound Poisson distribution. 

Recall the behaviour of $\bind_{GB}$ on countable sets described in $\S$~\ref{gb-bind}.
By considering $\underline{N} \in GB\mathbbm{1}$ (say, the Poisson distribution) and
$\underline{X} \in GB\mathbbm{1}$ (the distribution of the iid $X_i$), we can express compound distributions as:
\begin{equation}\label{eqn:compounddist}
\gamma \ =\  \underline{N} \bind_{GB} \lambda \star\!.\  \underline{X} \qquad\in GB\mathbbm 1.
\end{equation}

\paragraph{A Poisson point process on the unit square}\label{pois-example}

The Poisson point process on the unit square $\mathbb{I}^2$ (Fig.~\ref{example}) can be simulated by first sampling a random number of points
from the Poisson distribution, and then uniformly distributing these points
across $\mathbb{I}^2$.
Consider again a Poisson distribution $\underline{N} \in GB\mathbbm{1}$ as a point process.
Now consider the point process $\underline{U} \in GB\mathbb{I}^2$ which returns a single
point uniformly distributed in $\mathbb{I}^2$. The Poisson point process $\pi$ can be built using the monad:
\begin{equation}\label{eqn:poisson-sim}
\pi \ =\ \underline{N} \bind_{GB} \lambda \star\!.\ \underline{U} \qquad\in GB\mathbb{I}^2. 
\end{equation}

\paragraph{Thinning a point process.}
Thinning is an operation applied to the points of an underlying point process, where
the points are thinned (removed) according to some probabilistic rule. Given some point process
$\alpha \in GBX$ and some thinning rule $t : X \rightarrow GBX$ such that $t(x)$
probabilistically returns either $[x]$ or $\varnothing$,
we can use the monad to build the thinned point process $\alpha' \in GBX$ as
\begin{align*}
\alpha' =  \alpha  \bind_{GB} \lambda x\!.\ t(x) \qquad\in GBX. 
\end{align*}
\paragraph{Displacing a point process.}
Displacement is an operation applied to the points of an underlying point process, where
the points are independently randomly displaced (translated) according to some distribution.
We model this distribution as a single-point point process $\Delta \in GB\mathbb R$.
The location of this random point is the random displacement distance.
For $\alpha \in GB\mathbb R$ we simulate the displaced point process
$\alpha' \in GB\mathbb{R}$ by sampling
an $x$ from $\alpha$, a displacement distance $d$ from $\Delta$, and then returning
the displaced point.
\[
\alpha' \ =\ \alpha \bind_{GB} \lambda x\!.\ 
(\Delta \bind_{GB} \lambda d\!.\  \eta^{GB}(x+d))\qquad \in GB\mathbb{R}. 
\]

\begin{wrapfigure}{r}{10cm}
\centering
	\includegraphics[scale=0.2]{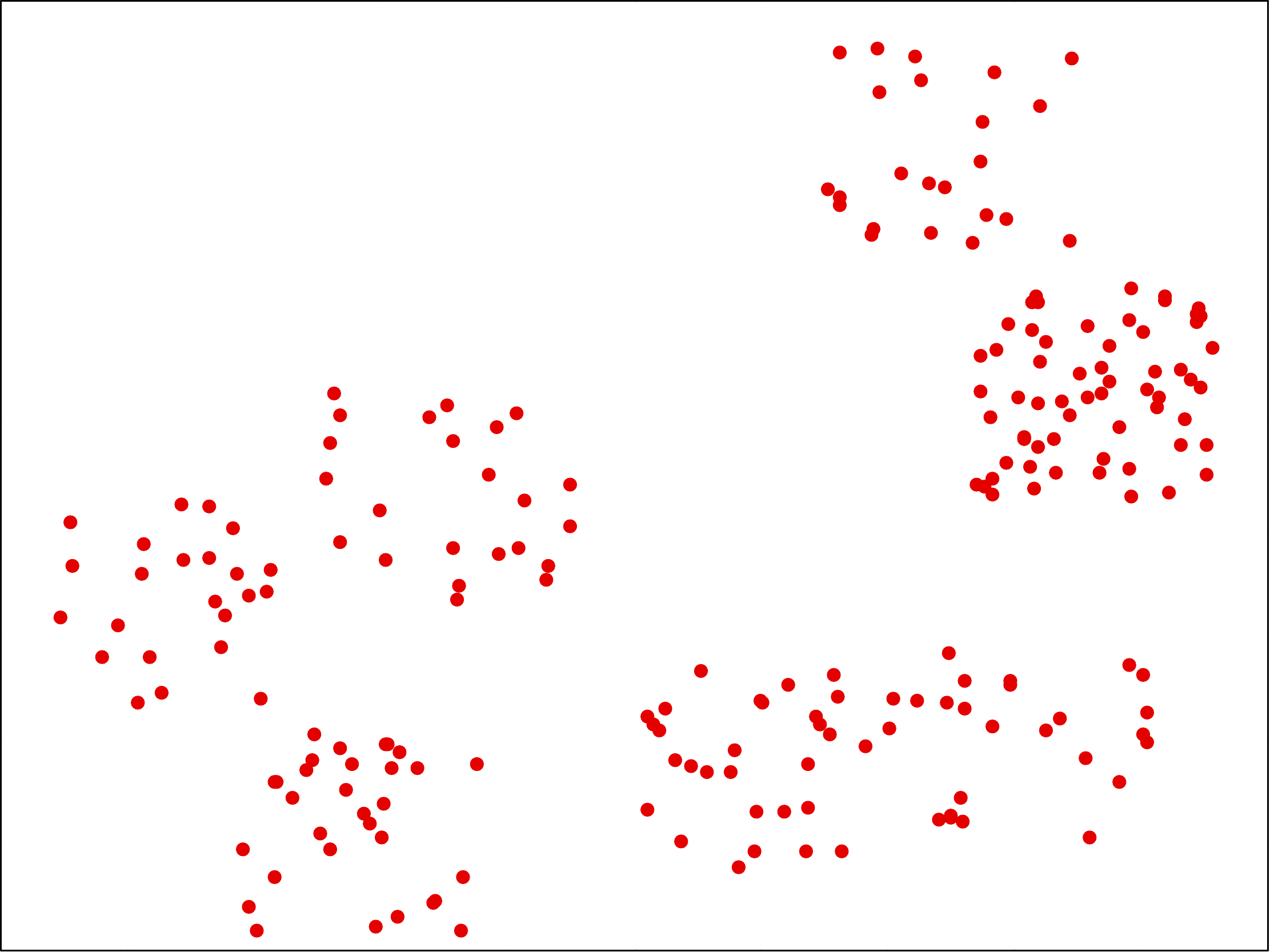}
\qquad
	\includegraphics[scale=0.2]{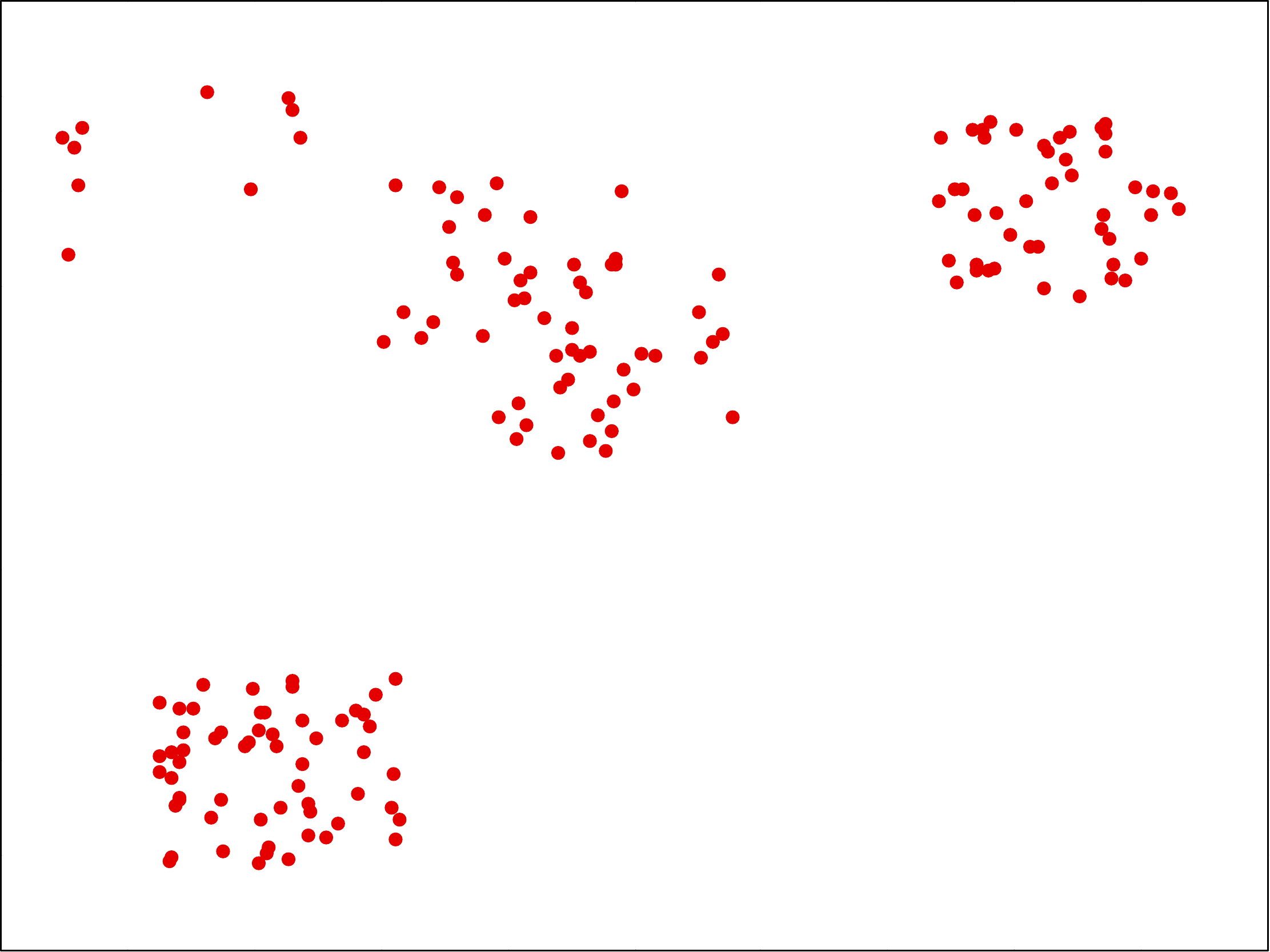}
	\caption{Two draws from a clustered point process \label{bind}}
      \end{wrapfigure}%
      \paragraph{Clustered point processes.}
Clustered point processes are useful in modelling phenomena which involve multiple
points spawning from individual seeds, such as clusters of trees, galaxies, or diseases.
Informally, a clustered point process is anything built using the monadic bind $\gamma_1\bind \lambda x\!. \gamma_2(x)$, where $\gamma_1$ is a point process for the initial seeds, and $\gamma_2$ is a point process that grows from each seed, where the location $x$ of a given seed may be a parameter. 
For a simple example, consider the point process in Fig.~\ref{bind} consisting of small square
clusters within the unit square. To simulate it we first sample the centres of these clusters from
our Poisson point process $\pi$  on the unit square~\eqref{eqn:poisson-sim}, and for each
cluster center we sample another Poisson point process. To sample the second point process, we again sample from another Poisson distribution $\underline N'$,
whose rate now depends on the provided coordinates of the cluster -- the closer to the
diagonal, the higher the rate, and we uniformly distribute these points in a small
square about this center using $\underline U'$, which is a location-dependent and 
scaled-down modification of $\underline U$ introduced earlier.
This results a Poisson number of clusters of Poisson processes, with those closer to the diagonal
being denser than those farther away.
\[
	\beta \ =\  \pi \bind_{GB} \lambda (x,y)\!.\ (\underline N' (x,y) \bind_{GB} \lambda \star\!.\ 
	\underline U' (x,y))\qquad\in GB\mathbb{I}^2. 
\]
Here, $\underline N$ is itself defined using the monad \eqref{eqn:poisson-sim}. This example is quite simple, but already illustrates that we can use the monad to quickly and clearly compose point processes to build complex statistical models. 


\section{The intensity measure as a monad morphism}
\label{sec:intensity}

A useful characteristic for describing
point processes is the expected number of points in a given region. For
example, in Fig.~\ref{example} we illustrated the homogeneous Poisson process with rate $10$.
The expected number of points in any region is proportional to $10a$ where $a$ is the area of the region. 
More generally, the \emph{intensity measure} of a point process is the measure that assigns to each measurable subset the expected number points in it.

There is a function $\mathbb E:GB(X)\to M(X)$ that takes a point process to its intensity measure.
In Theorem.~\ref{thm:expmonadmorph} we show that this function is actually a monad morphism from the point process monad (\S\ref{pp-monad}) to the monad of all measures (\S\ref{sec:allmeasuresmonad}). 
Thus, if we build a point process using the monadic constructions (for example by composing morphisms in the Kleisli category of $GB$) then we can immediately read off its intensity measure in a compositional way.
This new result generalizes Wald's lemma~(Ex.~\ref{cor:wald}), which is a core result in probability theory. 

\subsection{Constructing a monad morphism}

The expected number of points in a region $U \in \Sigma_X$ of a point process
$\alpha \in GBX$ can be given in terms of our generating sets $A^U_k$ for $BX$, as
$\sum_{k=1}^\infty k \cdot \alpha(A^U_k).$

We first show how to understand this in a more abstract way,
by injecting both probability measures $GX$ and bags $BX$ each
into measures $MX$, in a measurable and natural way.
The injection $i^G : G \rightarrow M$ is straightforward because $GX \subseteq MX$.
The injection $i^B : B \rightarrow M$ sends bags $[x_1,\dots,x_n]$ to  $\sum_{i=1}^n \delta_{x_i}$,
the multiplicity-respecting sum of Dirac deltas
centered around the elements $x_i$. It is measurable since its
inverse image map sends the generating sets $m^U_r \in \Sigma_{MX}$ to 
$\bigcup_{i=0}^{\left \lfloor{r}\right \rfloor} A^U_i \in \Sigma_{BX}$.
The proof that this is injective relies on $X$ being standard Borel. 
This injection is familiar in point process theory, indeed many authors actually define $BX$ as a space of integer-valued measures in the first place.
We can combine the horizontal composition of these two injections ($i^G \ast i^B$) with the
multiplication of $M$ in order to define $\mathbb{E}$.
$$\mathbb{E}\quad \defeq\quad  GB \xrightarrow{i^G \ast i^B} MM \xrightarrow{\mu^M} M $$
In the remainder of this paper we omit $\ast$ when writing horizontal compositions.
This definition of $\mathbb{E}$ does indeed
return the intensity measure of a point process:
\begin{lemma}\label{exp-lem}
For any point process $\alpha \in GBX$, $\mathbb{E}(\alpha)(U) = 
 \sum_k k \cdot \alpha(A^U_k)$.
\end{lemma}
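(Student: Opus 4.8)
The plan is to unfold the definition $\mathbb{E} = \mu^M \circ (i^G i^B)$ at a fixed space $X$, evaluate it on an arbitrary point process $\alpha \in GBX$ and region $U \in \Sigma_X$, and reduce the whole expression to a single integral over the space of bags. First I would rewrite the horizontal composite using naturality of $i^G$ (applied to the morphism $i^B_X$) as
\[
(i^G i^B)_X \;=\; M(i^B_X)\circ i^G_{BX} \;=\; i^G_{MX} \circ G(i^B_X) \;:\; GBX \to MMX,
\]
so that $(i^G i^B)_X(\alpha)$ is simply the pushforward measure $G(i^B_X)(\alpha)$, regarded as an element of $MMX$ via the inclusion $i^G_{MX}$. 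Applying the definition $\mu^M_X(\Phi)(U) = \int_{MX} \mathrm{ev}_U \, \diff\Phi$ (where $\mathrm{ev}_U$ sends $m\mapsto m(U)$) then yields
\[
\mathbb{E}(\alpha)(U) \;=\; \int_{MX} \mathrm{ev}_U \, \diff\bigl(G(i^B_X)(\alpha)\bigr).
\]

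Next I would apply the change-of-variables formula for the pushforward along $i^B_X : BX \to MX$ to turn this into an integral over the bags themselves, $\mathbb{E}(\alpha)(U) = \int_{BX} (\mathrm{ev}_U \circ i^B_X)\, \diff\alpha$. The key observation is that the integrand is a concrete counting function: for a bag $b = [x_1,\dots,x_n]$ one has $(\mathrm{ev}_U \circ i^B_X)(b) = \bigl(\sum_{i=1}^n \delta_{x_i}\bigr)(U) = \#\{\, i \mid x_i \in U \,\}$, i.e.\ the number of points of $b$ lying in $U$. Since the generating sets $A^U_k$ are exactly the preimages of $\{k\}$ under this counting function, the integrand can be written as the pointwise-convergent sum of simple functions $\sum_{k=0}^\infty k \cdot \mathbbm{1}_{A^U_k}$ over the measurable, pairwise disjoint, exhaustive partition $\{A^U_k\}_k$ of $BX$ (each bag is finite, so the count is a well-defined natural number).

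Finally I would integrate term by term, using that all summands are non-negative to justify interchanging the sum and the integral (monotone convergence / Tonelli for series), obtaining $\sum_{k} k \cdot \alpha(A^U_k)$, which is the claim. The individual steps are routine, so the point needing most care is the first: correctly expanding the horizontal composite $i^G i^B$ and confirming that the change-of-variables formula, stated in the preliminaries only for $[0,1]$-valued integrands, extends to the unbounded $\mathbb{R}^\infty_+$-valued map $\mathrm{ev}_U$. This is the standard extension of change of variables to arbitrary non-negative measurable functions, and the same non-negativity underwrites the term-by-term integration at the end.
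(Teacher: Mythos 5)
Your proposal is correct and follows essentially the same route as the paper's proof: expand the horizontal composite $i^Gi^B$, apply $\mu^M$ and the change-of-variables formula to obtain $\int_{BX} i^B_X(b)(U)\,\alpha(\diff b)$, and evaluate this integral on the partition $\{A^U_k\}_k$ of $BX$. The only difference is that you spell out the justifications (naturality for the two forms of the horizontal composite, monotone convergence for the term-by-term integration) that the paper leaves implicit.
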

\begin{proof}
Consider $\alpha \in GBX$ and $U \in \Sigma_X$. On expanding the horizontal composition
$i^Gi^B$ and using the change-of-variables formula for pushforward measures we have that
$$\mathbb{E}(\alpha)(U) = \int_{MX} \textrm{ev}_U\ \!\diff Mi^B_X(i^G_{BX}(\alpha))
= \int_{b \in BX} i^B_X(b)(U)\ \alpha(\diff b).$$
We separately compute this integral on the disjoint partitions $A^U_k$ ($k \in \mathbb N$) of $BX$.
In each partition, the value of $i^B_X(b)(U)$ is equal to $k$ (by definition).
This gives us the desired infinite
sum of $\sum_k k \cdot \alpha(A^U_k)$.
\end{proof}

\noindent
To show that $\mathbb{E} : GB \rightarrow M$ is a monad morphism we need to prove that
$$\text{\bf (Unit)}\ \ \eta^M = \mathbb{E} \circ \eta^{GB} \qquad\text{and}\qquad \mu^M \circ \mathbb{E}\mathbb{E} = \mathbb{E}\circ \mu^{GB}\ \ \text{\bf (Mult)}.$$

\noindent
Our main result stems from the fact that $l$ interacts well
with $i^G$ and $i^B$, which we prove next.
\begin{lemma}\label{l-well-behave}
{$(\mathbb{E} \circ l =)\ \mu^M \circ i^G\ast i^B \circ l = \mu^M \circ i^Bi^G : BG \rightarrow M$.}
\end{lemma}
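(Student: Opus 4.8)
The plan is to observe that both natural transformations, once evaluated at a space $X$, a bag of measures $[\nu_1,\dots,\nu_n]\in BGX$, and a region $U\in\Sigma_X$, return the same number, namely $\sum_{i=1}^n\nu_i(U)$. Since a measure on $X$ is determined by its values on all $U\in\Sigma_X$, and a natural transformation into $M$ is determined by its components, this will suffice. Conceptually the identity is just linearity of expectation, so the only real content lies in unwinding the two horizontal composites.

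First I would compute the right-hand side $\mu^M\circ i^Bi^G$. Writing the horizontal composite as $(i^B\ast i^G)_X = Mi^G_X\circ i^B_{GX}$, the map $i^B_{GX}$ sends $[\nu_1,\dots,\nu_n]$ to the sum of Diracs $\sum_{i=1}^n\delta_{\nu_i}\in M(GX)$, and pushing forward along the inclusion $i^G_X:GX\hookrightarrow MX$ leaves this sum unchanged, now viewed in $M(MX)$. Applying $\mu^M_X(\Phi)=\lambda U.\int_{MX}\mathrm{ev}_U\,\diff\Phi$ then yields $\sum_{i=1}^n\mathrm{ev}_U(\nu_i)=\sum_{i=1}^n\nu_i(U)$.

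For the left-hand side $\mathbb E\circ l$, I would start from the integral form of the intensity measure established inside the proof of Lemma~\ref{exp-lem}, $\mathbb E(\alpha)(U)=\int_{b\in BX}i^B_X(b)(U)\,\alpha(\diff b)$, applied to $\alpha=l[\nu_1,\dots,\nu_n]$. Because $l[\nu_1,\dots,\nu_n]=GK_n(\bigotimes_i\nu_i)$ is a push-forward along $K_n:X^n\to B_nX$, the change-of-variables formula rewrites this as $\int_{X^n}i^B_X(K_n(x_1,\dots,x_n))(U)\,(\bigotimes_i\nu_i)(\diff(x_1,\dots,x_n))$. Since $K_n(x_1,\dots,x_n)=[x_1,\dots,x_n]$ and $i^B_X[x_1,\dots,x_n]=\sum_i\delta_{x_i}$, the integrand is the count $\sum_{i=1}^n\mathbbm{1}_U(x_i)$, so by linearity of the integral and Fubini for the finite product $\bigotimes_i\nu_i$, in which the unintegrated coordinates each contribute $\nu_j(X)=1$, the integral reduces to $\sum_{i=1}^n\nu_i(U)$, matching the right-hand side.

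The hard part is purely bookkeeping: keeping the two horizontal composites straight and tracking how $\mu^M$ acts on a sum of Diracs. The Fubini step is harmless because each $\nu_i$ is a probability measure and the product is finite, so the failure of Fubini that obstructs strength of $M$ (noted in \S\ref{sec:allmeasuresmonad}) never arises. If one prefers to avoid change of variables, Lemma~\ref{exp-lem} gives the left-hand side as $\sum_k k\cdot l[\nu_1,\dots,\nu_n](A^U_k)$, where $l[\nu_1,\dots,\nu_n](A^U_k)=\sum_{|S|=k}\prod_{i\in S}\nu_i(U)\prod_{i\notin S}\nu_i(\overline U)$ is a Poisson-binomial mass function; its mean is again $\sum_i\nu_i(U)$, giving the same identity by a direct combinatorial computation.
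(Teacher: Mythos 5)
Your proposal is correct, and for the right-hand side it coincides exactly with the paper's computation: both unwind $\mu^M \circ i^Bi^G$ at $[\nu_1,\dots,\nu_n]$ and $U$ to obtain $\sum_i \nu_i(U)$. Where you diverge is in evaluating $\mathbb{E}\circ l$. The paper applies Lemma~\ref{exp-lem} to get $\sum_k k\cdot l[\nu_1,\dots,\nu_n](A^U_k)$ and then identifies $l[\nu_1,\dots,\nu_n](A^U_k)$ as the coefficient of $x^k$ in the generating function $P(x)=\prod_i(\nu_i(\overline U)+\nu_i(U)\,x)$, obtaining the mean $\sum_i\nu_i(U)$ by differentiating at $x=1$. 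You instead return to the integral representation $\int_{BX} i^B_X(b)(U)\,\alpha(\diff b)$, pull it back along $K_n$ by change of variables, and finish by linearity of the integral against the finite product measure; your closing remark about the Poisson-binomial mean is essentially the paper's route restated. Your main argument is arguably cleaner: it is literally linearity of expectation and bypasses the combinatorics of the sets $A^U_k$ entirely, needing only that $\bigl(\bigotimes_j\nu_j\bigr)(X\times\cdots\times U\times\cdots\times X)=\nu_i(U)$ — so even the appeal to Fubini is more than is strictly required, as you yourself note. The paper's generating-function computation, by contrast, stays entirely within the data used to define $l$ (its values on the generating sets $A^U_k$), which is why it can cite Lemma~\ref{exp-lem} as a black box without re-deriving the integral form. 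One small point: the change-of-variables formula is stated in \S\ref{sec:prelims} only for $g:Y\to[0,1]$, whereas your integrand $\sum_i\mathbbm{1}_U(x_i)$ takes values up to $n$; this is harmless (the formula holds for all nonnegative measurable functions, and the paper already uses it this way inside Lemma~\ref{exp-lem}), but it deserves a word.
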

\begin{wrapfigure}[5]{r}{0.25\textwidth}
\vspace{-3em}
\begin{tikzpicture}[scale=0.73]
    \node[shape=rectangle] (a) at (0,2) {$BGX$};
    \node[shape=rectangle] (b) at (4,2) {$GBX$};
    \node[shape=rectangle] (c) at (0,0) {$M^2X$};
    \node[shape=rectangle] (d) at (4,0) {$M^2X$};
    \node[shape=rectangle] (e) at (2,0) {$MX$};
    \draw [->](a) -- node[above] {$l_X$} (b);
    \path [->](a) edge node[right] {$(i^Bi^G)_X$} (c);
    \draw [->](c) -- node[below] {$\mu^M_X$} (e);
    \draw [->](d) -- node[below] {$\mu^M_X$} (e);
    \path [->](b) edge node[left] {$(i^Gi^B)_X$} (d);
\end{tikzpicture}
\end{wrapfigure}
\noindent\textit{Proof.}
(Diagram chasing) Consider $[\nu_1,\dots,\nu_n] \in BGX$ and $U \in \Sigma_X$.
Going from $BGX$ to $MX$ along the left edge and applying the resulting map to $U$ gives us
$\sum_i \nu_i(U)$. Along the other edge, making use of Lemma \ref{exp-lem}, we get
\mbox{$\sum_i i \cdot l[\nu_1,\dots,\nu_n](A^U_i)$}. Their equality can be
proved by noticing that $l[\nu_1,\dots,\nu_n](A^U_k)$ is simply
the coefficient of $x^k$ in the polynomial
$P(x) = \prod_i(\nu_i(\bar U) + \nu_i(U) \cdot x)$. And so equivalently
$P(x) = \sum_i l[\nu_1,\dots,\nu_n](A^U_i) \cdot x^i$. The desired equality is then
arrived at by taking the derivative of $P(x)$ at $x=1$.
\qed
\vspace{5pt}

\begin{lemma}\label{submonads}
$i^G : G \rightarrow M$ and $i^B : B \rightarrow M$ are monad morphisms.
\end{lemma}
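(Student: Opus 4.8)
The plan is to verify, for each of the two injections separately, the two monad morphism laws: the unit law and the multiplication law. Recall that a natural transformation $i : T \to S$ between monads is a monad morphism when $i \circ \eta^T = \eta^S$ and $i \circ \mu^T = \mu^S \circ ii \circ$ (more precisely $\mu^S \circ S i \circ i T$, using naturality). Since both $i^G$ and $i^B$ are the inclusions of submonads of $M$ that share the same unit and the same multiplication $\mu^M$, both verifications should reduce to elementary observations, so the main work is bookkeeping rather than invention.

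First I would handle $i^G : G \to M$. The unit law is immediate: $\eta^G_X(x) = \delta_x$ is the Dirac measure, and this is literally the unit $\eta^M_X(x)$ of the all-measures monad, so $i^G \circ \eta^G = \eta^M$ holds on the nose since $i^G$ is just the set-theoretic inclusion $GX \subseteq MX$. For the multiplication law, I would observe that the multiplication $\mu^G_X(\nu) = \lambda U. \int_{GX} \mathrm{ev}_U \diff\nu$ is defined by exactly the same integral formula as $\mu^M_X$; the only difference is the domain of integration. Since $i^G$ is an inclusion, chasing an element $\nu \in GGX$ around the square amounts to checking that integrating $\mathrm{ev}_U$ against $\nu$ (viewed in $GGX$) gives the same measure whether one first pushes everything into $MMX$ or computes in $GGX$ and then includes. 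This is a direct consequence of the fact that the two multiplications agree and that push-forward commutes with the inclusion, i.e. the change-of-variables formula.

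Next I would treat $i^B : B \to M$, which is slightly more involved because $B$ is the bag monad and $M$ is a measure monad. The unit law requires $i^B_X(\eta^B_X(x)) = \eta^M_X(x)$; since $\eta^B_X(x) = [x]$ and $i^B$ sends $[x]$ to $\delta_x = \eta^M_X(x)$, this is immediate. For the multiplication law I would compute that $i^B$ sends the multiset union $\mu^B_X([b_1,\dots,b_n]) = \bigcup_i b_i$ to $\sum_i \sum_{x \in b_i} \delta_x$ (multiplicity-respecting), and compare this with $\mu^M_X$ applied to the image under $MM$ of the bag of bags. The key point is that summing Dirac deltas is compatible with the measure-multiplication $\mu^M$, which itself sums/integrates componentwise; for the discrete sum-of-diracs structure this is just the statement that a finite multiset union corresponds to adding up the corresponding integer-valued measures.

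The step I expect to be the main obstacle is the multiplication law for $i^B$, because it is the only place where the combinatorial bag structure must be matched against the analytic integral defining $\mu^M$. Here I would need to be careful that the natural transformation $i^B i^B : BB \to MM$ really does make the relevant square commute, and that the finiteness of bags justifies interchanging the finite sum of Dirac deltas with the $\mu^M$ integral; this is where the injectivity remark (relying on $X$ standard Borel) and measurability of $i^B$ from the excerpt are used to ensure everything lands in the right $\sigma$-algebras. Once that square is checked, the result follows, and I would note that both verifications are essentially forced by the design of $M$ as a common extension of $G$ and $B$ sharing one unit and one multiplication.
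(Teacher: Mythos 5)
The paper states this lemma without proof, and your verification is correct and is exactly the routine check the authors intend: both inclusions share $M$'s unit (the Dirac delta) on the nose, and the two multiplication squares reduce, respectively, to the change-of-variables formula for the push-forward along $i^G_X$ (so that $\int_{MX}\mathrm{ev}_U\,\diff (Mi^G_X)(\nu)=\int_{GX}\mathrm{ev}_U\,\diff\nu$) and to the elementary identity $\mu^M_X\left(\sum_i\delta_{\rho_i}\right)=\sum_i\rho_i$ applied to $\rho_i=i^B_X(b_i)$, which matches the multiplicity-respecting bag union. One small correction: the injectivity of $i^B$ and the standard-Borel hypothesis play no role in verifying the monad-morphism laws (they only matter for regarding $B$ as a genuine \emph{sub}monad), so that anticipated obstacle is not actually an obstacle.
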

\vspace{5pt}
\begin{theorem}The intensity measure $\mathbb{E} : GB \rightarrow M$ is a monad morphism.
  \label{thm:expmonadmorph}
\end{theorem}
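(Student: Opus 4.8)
The plan is to verify the two monad-morphism equations \textbf{(Unit)} and \textbf{(Mult)} stated just above. Both should fall out as purely formal consequences of three facts already in hand: that $i^G$ and $i^B$ are monad morphisms (Lemma~\ref{submonads}), that $l$ is compatible with them via Lemma~\ref{l-well-behave}, and the monad axioms of $M$. In fact the statement is an instance of a general principle: given a distributive law $l:BG\to GB$ and monad morphisms $i^G:G\to M$, $i^B:B\to M$ into a common monad $M$ satisfying the compatibility $\mu^M\circ i^Gi^B\circ l=\mu^M\circ i^Bi^G$, the composite $\mathbb{E}=\mu^M\circ i^Gi^B$ is automatically a monad morphism out of the induced composite monad $GB$. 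The only real ingredient is careful $2$-categorical bookkeeping of horizontal and vertical composites via the interchange law.

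For \textbf{(Unit)}, recall $\eta^{GB}=\eta^G\eta^B$. I would compute $\mathbb{E}\circ\eta^{GB}=\mu^M\circ i^Gi^B\circ \eta^G\eta^B$ and use interchange to rewrite the middle as $(i^G\circ\eta^G)(i^B\circ\eta^B)$. Since $i^G,i^B$ are monad morphisms (Lemma~\ref{submonads}) each inner composite collapses to $\eta^M$, leaving $\mu^M\circ\eta^M\eta^M$, which the unit law $\mu^M\circ\eta^M M=\mathrm{id}$ of $M$ reduces to $\eta^M$. Alternatively one can argue pointwise: $\eta^{GB}(x)$ is the deterministic process returning the singleton bag $[x]$, so Lemma~\ref{exp-lem} gives $\mathbb{E}(\eta^{GB}(x))(U)=\sum_k k\cdot\eta^{GB}(x)(A^U_k)=\delta_x(U)=\eta^M(x)(U)$.

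For \textbf{(Mult)} I would expand each side to a map $GBGB\to M$ and show both equal one common composite. Writing $\mu^{GB}=(\mu^G\mu^B)\circ GlB$, the right-hand side $\mathbb{E}\circ\mu^{GB}$ becomes $\mu^M\circ i^Gi^B\circ(\mu^G\mu^B)\circ GlB$; applying interchange and then the monad-morphism property of $i^G,i^B$ replaces $i^G\circ\mu^G$ and $i^B\circ\mu^B$ by $\mu^M\circ i^Gi^G$ and $\mu^M\circ i^Bi^B$, turning this into the fourfold $M$-multiplication of the string $i^G\,i^G\,i^B\,i^B$ precomposed with $GlB$, i.e.\ $\mu^M\circ\mu^M\mu^M\circ\big[i^G\ast\big((i^Gi^B)\circ l\big)\ast i^B\big]$. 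The left-hand side $\mu^M\circ\mathbb{E}\mathbb{E}$ expands, again by interchange, to the fourfold $M$-multiplication of the string $i^G\,i^B\,i^G\,i^B$, i.e.\ $\mu^M\circ\mu^M\mu^M\circ\big[i^G\ast(i^Bi^G)\ast i^B\big]$.

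The main obstacle, and the only substantive step, is reconciling these two, and this is exactly where associativity of $\mu^M$ does the work. The distributive law $l$ sits precisely between the middle $B$ and $G$, whereas the naive fourfold multiplication groups the outer pairs; so I would reassociate the fourfold $M$-multiplication as ``multiply the two outer factors around a single inner $\mu^M$ of the middle pair''. This lets me hit exactly the straddling middle factor with $\mu^M$: on the right the middle contribution is $\mu^M\circ i^Gi^B\circ l$, which Lemma~\ref{l-well-behave} rewrites as $\mu^M\circ i^Bi^G$, while on the left the middle contribution is already $\mu^M\circ i^Bi^G$. After this single substitution both sides reduce to the identical threefold composite $i^G\ast(\mu^M\circ i^Bi^G)\ast i^B$ followed by the (associative) $M$-multiplication, establishing \textbf{(Mult)}. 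I expect essentially all the effort to lie in making this realignment-by-associativity precise, since that is what permits Lemma~\ref{l-well-behave} to be applied to the one factor where the distributive law actually appears.
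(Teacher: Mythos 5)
Your proposal is correct and follows essentially the same route as the paper: the \textbf{Unit} law is the same short interchange computation, and your \textbf{Mult} argument—reduce both sides to fourfold $M$-multiplications via the monad-morphism laws for $i^G,i^B$, reassociate by associativity of $\mu^M$ so that a single inner $\mu^M$ straddles the middle pair, and apply Lemma~\ref{l-well-behave} there—is precisely the content of the paper's two commutative diagrams and the ensuing equational chain. Your framing of the result as an instance of a general principle about distributive laws and compatible monad morphisms into a common monad is a nice observation, but the underlying proof is the same.
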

\begin{proof}
A simple calculation shows {\bf Unit} to hold. For {\bf Mult} 
consider the two diagrams below.\\
\begin{tikzpicture}[scale=2,yscale=-0.75,xscale=1.75]
\begin{scope}
    \node[shape=rectangle] (a) at (0,0) {$GBGB$};
    \node[shape=rectangle] (b) at (2,0) {$GGBB$};
    \node[shape=rectangle] (c) at (1,1) {$GMB$};
    \node[shape=rectangle] (D) at (0,1) {$GMMB$};
    \node[shape=rectangle] (F) at (2,1) {$GMMB$};
    \node[shape=rectangle] (d) at (0,2) {$M^4$};
    \node[shape=rectangle] (f) at (2,2) {$M^4$};
    \node[shape=rectangle] (e) at (1,2) {$M^3$};
    \node[shape=rectangle] (g) at (0,3) {$M^3$};
    \node[shape=rectangle] (h) at (2,3) {$M^3$};
    \node[shape=rectangle] (i) at (1,3) {$M^2$};
    \node[shape=rectangle] (j) at (0,4) {$M^2$};
    \node[shape=rectangle] (l) at (2,4) {$M^2$};
    \node[shape=rectangle] (k) at (1,4) {$M$};
    \node[shape=circle, draw=gray] (x) at (1,0.5) {I};
    \node[shape=circle, draw=gray] (x) at (0.5,1.4) {II};
    \node[shape=circle, draw=gray] (x) at (1.5,1.4) {II};
    \node[shape=circle, draw=gray] (x) at (0.5,3) {III};
    \node[shape=circle, draw=gray] (x) at (1.5,3) {III};

    \draw [->](a) --   node[font=\scriptsize,above] {$GlB$} (b);
    \path [->](D) edge node[font=\scriptsize,right] {$i^GMMi^B $} (d);
    \path [->](F) edge node[font=\scriptsize,left] {$i^GMMi^B $} (f);
    \draw [->](D) --   node[font=\scriptsize,above] {$G\mu^MB$} (c);
    \draw [->](F) --   node[font=\scriptsize,above] {$G\mu^MB$} (c);
    \path [->](a) edge node[font=\scriptsize,right] {$Gi^Bi^GB $} (D);
    \path [->](b) edge node[font=\scriptsize,left] {$Gi^Gi^BB $} (F);
    \draw [->](d) --   node[font=\scriptsize,above] {$M\mu^MM$} (e);
    \draw [->](f) --   node[font=\scriptsize,above] {$M\mu^MM$} (e);
    \path [->](c) edge node[font=\scriptsize,left] {$i^GMi^B$} (e);
    \path [->](d) edge node[font=\scriptsize,right] {$\mu^MMM$} (g);
    \path [->](f) edge node[font=\scriptsize,left] {$\mu^MMM$} (h);
    \path [->](e) edge node[font=\scriptsize,right] {$M\mu^M $} (i);
    \path [->](g) edge node[font=\scriptsize,right] {$M\mu^M $} (j);
    \path [->](h) edge node[font=\scriptsize,left] {$M\mu^M$} (l);
    \path [->](i) edge node[font=\scriptsize,left ] {$\mu^M$} (k);
    \draw [->](j) --   node[font=\scriptsize,above] {$\mu^M$} (k);
    \draw [->](l) --   node[font=\scriptsize,above] {$\mu^M$} (k);
\end{scope}
\begin{scope}[shift={(2.75,0.5)},xscale=0.75,yscale=1.5]
    \node[shape=circle,draw=gray] (A) at (0.6,0.5) {IV};
    \node[shape=circle,draw=gray] (B) at (1.5,0.5) {V};
    \node[shape=circle,draw=gray] (C) at (0.6,1.5) {VI};
    \node[shape=circle,draw=gray] (D) at (1.55,1.5) {VII};
    \node[shape=rectangle] (1) at (0,0) {$GGBB $};
    \node[shape=rectangle] (2) at (1,0) {$GBB$};
    \node[shape=rectangle] (3) at (2,0) {$GB$};
    \node[shape=rectangle] (4) at (0,1) {$MMBB $};
    \node[shape=rectangle] (5) at (1,1) {$MBB $};
    \node[shape=rectangle] (6) at (2,1) {$MB$};
    \node[shape=rectangle] (7) at (0,2) {$M^4 $};
    \node[shape=rectangle] (8) at (1,2) {$M^3$};
    \node[shape=rectangle] (9) at (2,2) {$M^2$};
    \path [->](1) edge node[font=\scriptsize,right] {$i^Gi^GBB $} (4);
    \path [->](2) edge node[font=\scriptsize,right]  {$i^GBB $} (5);
    \path [->](3) edge node[font=\scriptsize,left]  {$i^GB $} (6);
    \path [->](4) edge node[font=\scriptsize,right] {$MMi^Bi^B $} (7);
    \path [->](5) edge node[font=\scriptsize,right]  {$Mi^Bi^B$} (8);
    \path [->](6) edge node[font=\scriptsize,left]  {$Mi^B $} (9);
    \draw [->](1) --   node[font=\scriptsize,above] {$\mu^GBB $} (2);
    \draw [->](2) --   node[font=\scriptsize,above] {$G\mu^B $} (3);
    \draw [->](4) --   node[font=\scriptsize,above] {$\mu^MBB$} (5);
    \draw [->](5) --   node[font=\scriptsize,above] {$M\mu^B$} (6);
    \draw [->](7) --   node[font=\scriptsize,above] {$\mu^MMM$} (8);
    \draw [->](8) --   node[font=\scriptsize,above] {$M\mu^M$} (9);
\end{scope}
\end{tikzpicture}

\noindent
All the sub-diagrams above commute: (I) due to Lemma \ref{l-well-behave}, (II) by
naturality, (III) due to associativity of $\mu^M$, (IV) and (VII) due to $i^G$ and
$i^B$ being monad morphisms (Lemma \ref{submonads}),
and finally (V) and (VI) by naturality. Using the commutative diagrams above we
prove the required equality.
\begin{align*}
\mu^M \circ \mathbb{E}\mathbb{E}
 &\ :\  GBGB \rightarrow M
 \\
\mu^M \circ \mathbb{E}\mathbb{E}
 &= \mu^M \circ M\mu^M \circ \mu^MMM \circ i^GMMi^B \circ {Gi^Bi^GB}
&\text{(defn. of $\mathbb{E}$ + naturality)}
 \\
 &= \mu^M \circ M\mu^M \circ \mu^MMM \circ {i^GMMi^B \circ Gi^Gi^BB} \circ GlB
&\text{(left diagram)}
 \\
 &= \mu^M \circ {M\mu^M \circ \mu^MMM \circ MMi^Bi^B \circ i^Gi^GBB} \circ GlB
&\text{(naturality)}
 \\
 &= \mu^M \circ Mi^B \circ i^GB \circ G\mu^B \circ \mu^GBB \circ GlB
&\text{(right diagram)}
 \\
 &= \mu^M \circ Mi^B \circ i^GB \circ \mu^{GB}
&\text{(defn. of $\mu^{GB}$)}
 \\
 &= \mathbb{E} \circ \mu^{GB}
&\text{(defn. of $\mathbb{E}$)}
\end{align*}
\vspace{-1.2cm}

\end{proof}

\subsection{Examples}
\begin{example}In 
  \S\ref{pois-example} we simulated a Poisson point process by composing the Poisson distribution with a uniform singleton. We show this has the required intensity measure in a compositional way, using the monad morphism. Let $\underline N \in GB\mathbbm 1\cong G\mathbb N$ be the Poisson distribution with mean
  $\Lambda$, and let $\underline U\in GB\mathbb I^2$ be the uniformly distributed single-point
  process, from $\S\ref{sec:examples}$.
The simulated Poisson process is $\pi=(\underline{N}\  \bind_{GB} \lambda \star\!.\  \underline{U})$, and we have 
\begin{align*}
\mathbb E(\pi) &= \mathbb E(\underline{N}\  \bind_{GB} \lambda \star\!.\  \underline{U})\\
&= \mathbb E(\underline{N}) \bind_M \lambda \star\!.\  \mathbb E(\underline{U})&\text{(Theorem~\ref{thm:expmonadmorph})}\\
&= \lambda W \in \Sigma_{\mathbb{I}^2}. \mathbb E(\underline{N})(\star) \times \mathbb E(\underline{U})(W)\\
&= \lambda W \in \Sigma_{\mathbb{I}^2}. \Lambda \times \abs{W} \qquad\qquad\in M\mathbb{I}^2  
\end{align*}
In the penultimate step we use the fact that in the discrete case the bind for $M$, just like
for $G$, computes a weighted sum, which in this case is just a single term.
In the final step, we substitute in the intensity of $\underline N$
and the intensity measure of the uniform point process $\underline U$, 
producing the correct intensity. $W$ is a measurable subset of the unit square 
and $\abs{W}$ is its area.
\end{example}

\begin{example}[Discrete Wald's Lemma]
  We regard arbitrary probability distributions $\underline N,\underline X$ on the natural numbers as point
  processes in $GB\mathbbm 1$ via \eqref{eqn:GB1GBN}.
  Wald's lemma says that the expected value of the compound distribution~($\gamma$, \eqref{eqn:compounddist}) is the product of the expectations, which is immediate from the fact that $\mathbb E$ is a monad morphism:
\label{cor:wald}
\begin{align*}
\mathbb E(\gamma) &= \mathbb E(\underline{N} \bind_{GB} \lambda \star\!.\  \underline{X})\\
&= \mathbb E(\underline{N}) \bind_{M} \lambda \star\!.\  \mathbb E(\underline{X})& \text{(Theorem~\ref{thm:expmonadmorph})}\\
&= \lambda \star\!.\  \mathbb E(\underline{N})(\star) \times \mathbb E(\underline{X})(\star) \qquad\in M\mathbbm 1 
\end{align*}
\end{example}

\paragraph{Remark.}
We remark that a natural transformation in the \emph{opposite} direction ($M^+ \to GB$) has been exhibited in~\cite{giry-machine}, where $M^+(X)$ is the space of finite non-empty measures. This natural transformation takes an intensity measure to the corresponding inhomogeneous Poisson process. Since $M^+$ is not a monad, it remains to be seen whether this natural transformation can be made into a monad morphism somehow.


\paragraph{Concluding remarks.}
We have exhibited a monad $GB$ for point processes (\S\ref{pp-monad}), and shown that the intensity measure is a monad morphism~(\S\ref{sec:intensity}). This gives a compositional way of building and reasoning about increasingly complicated point processes (\S\ref{sec:examples}). This is further evidence towards the claim that applied category theory has the potential to be a useful tool for statistical modelling. 

\paragraph{Acknowledgements.}
We are grateful for discussions with Peter Lindner regarding the role of point processes in his work~\cite{grohe-lindner}. Thanks too to Bart Jacobs and Gordon Plotkin about the role of multisets.
Thanks to the anonymous reviewers and to Mathieu Huot and Dario Stein for their feedback.
Finally we appreciate the opportunity to present this work at the LAFI 2020 workshop~\cite{lafi2020}.
Staton's research is supported by a Royal Society University Research Fellowship.

\bibliographystyle{eptcsini}
\bibliography{generic}
\end{document}